\documentclass[aps,pra,reprint,a4paper,10pt,nofootinbib,longbibliography,floatfix]{revtex4-2}
\usepackage[T1]{fontenc}
\usepackage[utf8]{inputenc}
\usepackage{lmodern}
\usepackage{amsmath,amssymb,amsthm,mathtools,bm}
\usepackage{microtype,booktabs,graphicx}
\usepackage{tikz}
\usetikzlibrary{arrows.meta}
\usepackage[dvipsnames]{xcolor}
\usepackage[colorlinks=true,linkcolor=MidnightBlue,citecolor=MidnightBlue,urlcolor=MidnightBlue]{hyperref}
\newtheorem{theorem}{Theorem}
\newtheorem{lemma}{Lemma}
\DeclareMathOperator{\Tr}{Tr}
\DeclareMathOperator{\SEP}{SEP}

\newcommand{\id}{\mathrm{id}}
\newcommand{\ER}{E_R}
\newcommand{\ERi}{E_R^{\infty}}
\newcommand{\cL}{\mathcal L}
\newcommand{\cA}{\mathcal A}
\newcommand{\cN}{\mathcal N}
\newcommand{\cG}{\mathcal G}
\newcommand{\cE}{\mathcal E}
\newcommand{\cF}{\mathcal F}
\newcommand{\cP}{\mathcal P}
\newcommand{\cH}{\mathcal H}
\newcommand{\EE}{\mathbb E}
\newcommand{\fall}[2]{#1^{\underline{#2}}}
\newcommand{\ket}[1]{\lvert #1\rangle}
\newcommand{\bra}[1]{\langle #1\rvert}
\newcommand{\proj}[1]{\ket{#1}\!\bra{#1}}
\newcommand{\norm}[1]{\left\lVert#1\right\rVert}
\newcommand{\dd}{\mathrm d}
\newcommand{\Sym}{\operatorname{Sym}}
\allowdisplaybreaks[2]
\hypersetup{
 pdftitle={Energy-constrained two-way capacity bounds for noisy Gaussian channels},
 pdfauthor={Stefano Pirandola},
 pdfsubject={Thermal loss, noisy amplification, and additive Gaussian noise}
}
\begin{document}
\title{Energy-constrained two-way capacity bounds for noisy Gaussian channels}
\author{Stefano Pirandola}
\affiliation{Department of Computer Science, University of York, York YO10 5GH, United Kingdom}
\begin{abstract}
A gap between achievable rates and converse bounds persists for
energy-constrained bosonic communication in the presence of excess noise.
We derive a unified adaptive weak-converse bound for thermal attenuation,
noisy amplification, and additive Gaussian noise under an unconditional
mean transmitted-photon-number constraint. The bound applies to arbitrary
adaptive protocols with quantum memories and constrains the two-way
quantum, entanglement-distribution, private, and secret-key capacities.
It vanishes throughout the entanglement-breaking region and approaches
the corresponding PLOB bound at infinite energy. At finite energy, it
improves the evaluated Gaussian squashed-entanglement and PLOB bounds in
relevant parameter regimes, while comparison with hashing rates quantifies
the remaining gap to achievability. The result provides a common
energy-dependent benchmark for noisy bosonic Gaussian channels and extends
finite-energy converse methods beyond the quantum-limited setting.
\end{abstract}
\maketitle

\section{Introduction}\label{sec:introduction}
Thermal attenuation, noisy amplification, and additive Gaussian noise are
basic models of imperfect bosonic communication~\cite{HolevoWerner,Weedbrook,BraunsteinVanLoock,HolevoOneMode,CarusoClassification}.
Their energy-constrained two-way-assisted capacities describe the
ultimate rates of quantum communication and secret-key generation with
unlimited authenticated public classical communication at a prescribed
input-energy budget. A finite transmitted-photon budget makes these
rates sensitive to both the channel noise and the energy supplied by the
sender. Noise photons at the receiver do not relax this input constraint.

The PLOB bounds provide energy-independent upper bounds for these three
channel families~\cite{PLOB}. Gaussian squashed-entanglement constructions
give energy-dependent bounds~\cite{DSW,Goodenough}, while coherent and reverse
coherent information yield explicit achievable rates
\cite{DevetakWinter,PGPBL,RCI}. More elaborate key-generation and
entanglement-distillation protocols can improve the lower bounds
\cite{OLCSPB,OPB,MLG,BPinterpolation,BPdistillation,BPthermal}.
For noisy Gaussian channels, these approaches generally leave a gap between
achievable rates and converse bounds.

The distinction between unassisted quantum capacity and capacities with
two-way classical assistance is essential: structural results for
degradable bosonic channels~\cite{WolfCapacity} do not by themselves
determine the adaptive capacities considered here. Entanglement
distillation provides the connection between noisy shared states and
communication rates~\cite{BBPSSW,BDSW}, while channel simulation turns
entanglement bounds into restrictions on adaptive protocols.

Reference~\cite{LossAmplifier} determines the energy-constrained
two-way capacities of the pure-loss and quantum-limited amplifier channels
using \emph{sector teleportation simulation}. This method encodes the
transmitted photon number in an exact simulation resource and accommodates
the amplifier's unbounded output photon number. We extend that framework
to positive excess noise, reusing the simulation, truncation, and
operational testing tools of the same reference. The channel-dependent
step is the evaluation of the mixed noisy resources.

The main additional challenge is that the simulated resource is itself
noisy. Even after fixing the total input and output photon numbers, the
state retains uncertainty over the different attenuation and amplification
processes that may have occurred. We therefore evaluate its entanglement
directly and show that the resulting quantity has the structure needed for
the converse. In the large-sector limit, this leads to a common
energy-dependent function that applies to attenuation, amplification, and
unit-gain additive noise.

The resulting converse applies to arbitrary correlated adaptive inputs.
A photon-number decomposition introduces only a sublinear correction, which
vanishes asymptotically, while the large-sector entanglement density allows
the bound to be expressed directly in terms of the actual unconditional
transmitted energies. These ingredients produce energy-dependent converse
bounds that improve the previously available bounds and, for the pure-loss
and quantum-limited amplifier channels, meet the corresponding achievable
rates.
More broadly, the results show that retaining the physical input-energy
budget within channel simulation can turn otherwise energy-insensitive
entanglement bounds into sharp finite-energy converses for adaptive bosonic
communication.

Section~\ref{sec:results} presents the main results and comparisons, and
Sec.~\ref{sec:conclusion} gives the conclusions.
Appendix~\ref{app:structure} outlines the structure of the proof and
summarizes its dependencies in a flow diagram.
Appendix~\ref{app:tools} organizes the inherited tools around the
operational target, sector simulation, and photon-number accounting;
Appendix~\ref{app:resource} evaluates the corresponding resource states,
Appendix~\ref{app:converse} proves the common adaptive weak converse, and
Appendix~\ref{app:benchmarks} treats limits and benchmarks.

\section{Main results}\label{sec:results}
\subsection{Channels and operational assumptions}\label{sec:model}
We use quadratures for which the vacuum covariance is $I_2:=\mathrm{diag}(1,1)$. A centered,
phase-insensitive Gaussian channel $\cG_{\tau,v}$ is specified by
\begin{equation}
 \bar r\longmapsto\sqrt\tau\,\bar r,~~
 V\longmapsto\tau V+(2v+1-\tau)I_2.
 \label{eq:channel}
\end{equation}
Here $\tau>0$ is the intensity gain and $v$ is the output mean photon
number for vacuum input. Complete positivity requires
$v\ge\max\{0,\tau-1\}$. We consider the strictly noisy region
\begin{equation}
 v>\max\{0,\tau-1\},~~
 \Tr[\hat n\,\cG_{\tau,v}(\rho)]=\tau\Tr(\hat n\rho)+v.
 \label{eq:mean-output}
\end{equation}
Table~\ref{tab:channels} identifies the three families. For thermal loss
and noisy amplification, $\nu>0$ is the environmental thermal photon
number. For additive noise our convention is
\begin{equation}
 \cN_\xi(\rho)=\int_{\mathbb C}\frac{\dd^2\alpha}{\pi\xi}
 e^{-|\alpha|^2/\xi}D(\alpha)\rho D(\alpha)^\dagger,
 ~~\xi>0,
 \label{eq:additive-definition}
\end{equation}
so $\xi$ is the number of photons added to a vacuum input.
These channels are entanglement breaking precisely when $v\ge\tau$
\cite{HolevoWerner,PLOB,MLG}.

\begin{table}[tb]
\vspace{-0.2cm}
\caption{Channel parameters in Eq.~\eqref{eq:channel}. The thermal
environmental photon number is $\nu>0$ in the first two rows.}
\label{tab:channels}
\begin{ruledtabular}
\begin{tabular}{lcc}
Channel & $\tau$ & $v$\\
\hline
Thermal loss $\cL_{\eta,\nu}$ & $0<\eta<1$ & $(1-\eta)\nu$\\
Noisy amplifier $\cA_{\kappa,\nu}$ & $\kappa>1$ & $(\kappa-1)(\nu+1)$\\
Additive noise $\cN_\xi$ & $1$ & $\xi>0$
\end{tabular}
\end{ruledtabular}
\end{table}


Alice and Bob start from local preparations and public randomness,
without shared entanglement or secret key. Write $\cG_{A\to B}$ for
the memoryless channel, with input mode $A$ and output mode $B$,
and let $X$ and $Y$ denote Alice's and Bob's local memories,
respectively. Immediately before channel use $i$, their unconditional
joint state is $\rho^{(i)}_{X_iA_iY_i}$: Alice holds $X_iA_i$ and
Bob holds $Y_i$. The channel produces
\begin{equation}
 \omega^{(i)}_{X_iB_iY_i}
 =(\mathrm{id}_{X_i}\otimes\cG_{A_i\to B_i}
   \otimes\mathrm{id}_{Y_i})
   (\rho^{(i)}_{X_iA_iY_i}),
\end{equation}
after which Alice holds $X_i$ and Bob holds $B_iY_i$.
The memories include all retained local quantum registers and
classical histories. The marginal $\rho^{(i)}_{A_i}$ is the actual
transmitted state, and $\hat n_{A_i}$ denotes the input
photon-number operator. We omit the index $i$ when discussing
a generic channel use.

Between channel uses, the parties perform adaptive local operations
and authenticated two-way public classical communication (LOCC).
Each finite protocol has finitely many local rounds and finite or
countable outcomes; its round bound may grow with the number of
channel uses. All Hilbert spaces are separable, states are normal, and local
instruments are normal. Local memories and apparatus have no dimension,
energy, or entropy constraint. Eve holds the channel environments,
their purifications, and every public message. These are the
operational assumptions defining the energy-constrained capacities
studied here, as in Ref.~\cite[Appendix A 2]{LossAmplifier}. After the $n$th channel use and the final LOCC, we denote by
$\zeta_n$ the unconditional joint state of all registers retained
by Alice and Bob, including any retained local auxiliary systems,
with Eve's systems traced out. For key protocols, the entanglement analysis uses $\zeta_n$
for the final state of the input-preserving retained quantum
implementation, including its local shields, before the key
registers are measured; see Appendix~\ref{app:tests}.
The corresponding measured state
$\kappa_{K_AK_BE_{\rm all}}$, introduced below, is used to
define the operational key error.

For an $n$-use protocol, constrain only the actual unconditional
transmitted means:
\begin{equation}
 x_i=\Tr(\hat n_{A_i}\rho^{(i)}_{A_i}),~~
 \sum_{i=1}^n x_i\le nN<\infty.
 \label{eq:energy}
\end{equation}
For a fixed
channel $\cG$ and budget $N$, the energy-constrained capacities $Q_2=D_2\le P_2=K_2$ concern quantum transmission, entanglement distribution,
private classical transmission, and secret-key agreement, respectively.
Each is the supremum of achievable rates $\liminf_n n^{-1}\log_2d_n$
over protocol sequences obeying Eq.~\eqref{eq:energy} with vanishing
error and finite target size $d_n$ at every blocklength. 


At blocklength $n$, write $d=d_n$. For entanglement distribution,
the target is the maximally entangled state
\begin{equation}
 \Phi_d=\proj{\Phi_d},~~
 \ket{\Phi_d}=\frac{1}{\sqrt d}\sum_{a=0}^{d-1}
 \ket{a}_{Q_A}\ket{a}_{Q_B},
\end{equation}
where $Q_A$ and $Q_B$ are $d$-dimensional registers held by Alice
and Bob. Let $\sigma_{Q_AQ_B}$ be their actual output state,
obtained from $\zeta_n$ by tracing out the other retained systems.
The quantum error is the entanglement infidelity
\begin{equation}
 \varepsilon=1-\Tr(\Phi_d\,\sigma_{Q_AQ_B}).
\end{equation}
Quantum transmission uses the same criterion: Alice prepares
$\Phi_d$ locally and transmits one half through the protocol,
while retaining $Q_A$ as a reference.

For secret-key agreement, the target consists of identical,
uniform classical keys that are independent of Eve. Its key
marginal is
\begin{equation}
 \tau_d^{\rm key}
 =\frac1d\sum_{a=0}^{d-1}\proj{aa}_{K_AK_B},
\end{equation}
where $K_A$ and $K_B$ are Alice's and Bob's classical key registers.
Let $\kappa_{K_AK_BE_{\rm all}}$ be the actual final state after
key readout, with $E_{\rm all}$ containing Eve's quantum side
information and the complete public transcript. The ideal
comparison state is
$\tau_d^{\rm key}\otimes\kappa_{E_{\rm all}}$, which retains
Eve's actual marginal. The key error is
\begin{equation}
 e=\tfrac12\norm{\kappa_{K_AK_BE_{\rm all}}
       -\tau_d^{\rm key}\otimes\kappa_{E_{\rm all}}}_1,
 \label{eq:key-error}
\end{equation}
in trace norm $\norm{Z}_1=\Tr\sqrt{Z^\dagger Z}$. This criterion jointly expresses correctness and secrecy
\cite{PortmannRenner}. Private classical transmission uses the
same target and error criterion, with $K_A$ replaced by a
uniformly chosen message register $M$ and $K_B$ by Bob's decoded
message register $\widehat M$.

\subsection{A common energy-constrained weak converse}\label{sec:unified}
Let us define the entropic functions
\begin{align}
 &g(x)=(x+1)\log_2(x+1)-x\log_2x,\nonumber \\
 &\varphi(x)=x\log_2x,
 \label{eq:entropy-functions}
\end{align}
with $g(0)=\varphi(0)=0$. Then, for $N\ge0$, we also set
\begin{equation}
 \begin{aligned}
 y&=\tau N+v,~~ L=1+N+y,\\
 t&=\frac{L-\sqrt{L^2-4\tau N(N+1)}}2,\\
 u&=N-t,~~ w=y-t.
 \end{aligned}
 \label{eq:rateparams}
\end{equation}
The parameters $u,w$ are nonnegative. In the non-entanglement-breaking
region $v<\tau$, we define
\begin{equation}
 \boxed{\begin{aligned}
 \cF_{\tau,v}(N)={}&\varphi(t)+\varphi(L-t)-\varphi(N)\\
 &-\varphi(y)-g(u)-g(w).
 \end{aligned}}
 \label{eq:F}
\end{equation}
For $v\ge\tau$, set $\cF_{\tau,v}(N)=0$.

\begin{theorem}[Energy-constrained weak converse]\label{thm:gaussian}
For every channel in Table~\ref{tab:channels}, under the operational
assumptions and energy constraint of Eq.~\eqref{eq:energy}, we can write
\begin{equation}
 \boxed{\begin{aligned}
 Q_2(\cG_{\tau,v},N)&=D_2(\cG_{\tau,v},N)\\
 &\le P_2(\cG_{\tau,v},N)=K_2(\cG_{\tau,v},N)\\
 &\le\cF_{\tau,v}(N).
 \end{aligned}}
 \label{eq:capacities}
\end{equation}
\end{theorem}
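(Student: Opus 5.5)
The equalities $Q_2=D_2$ and $P_2=K_2$ and the ordering $D_2\le K_2$ are standard consequences of free two-way public communication: teleportation converts entanglement into quantum transmission and back, a one-time pad converts key into private messages and back, and a maximally entangled state yields a key. The work is therefore to bound $K_2(\cG_{\tau,v},N)$ by $\cF_{\tau,v}(N)$ for arbitrary adaptive protocols obeying Eq.~\eqref{eq:energy}.

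I would follow the sector-teleportation framework of Ref.~\cite{LossAmplifier}. The first step is a photon-number truncation and sector decomposition of each transmitted mode. We project $A_i$ onto a block of Fock sectors of bounded size. Using Eq.~\eqref{eq:energy} together with Markov and typicality estimates, this changes the final state $\zeta_n$ by a trace-norm error that vanishes as $n\to\infty$. Its effect on the rate is a sublinear correction, which Fannes–Audenaert/Alicki–Fannes-type continuity of the relative entropy of entanglement controls; this continuity holds because the target dimension is $d_n$. Within each sector, the channel restricted to inputs of total photon number $m$ is simulated exactly by teleportation over a resource state $\chi_m$. For $\cG_{\tau,v}$ this is the normalized Choi state of the channel restricted to that sector. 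The phase-insensitive covariance of $\cG_{\tau,v}$ makes this simulation an LOCC, and it preserves the photon-number label.

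By the standard stretching argument, the adaptive protocol then collapses to LOCC applied to a mixture of tensor products $\bigotimes_i\chi_{m_i}$, weighted by the actual photon-number distribution. Monotonicity and subadditivity of $\ER$, together with the fact that $\ER$ of a flagged mixture is at most the average, give
\[
nR\lesssim\sum_i\sum_m p_i(m)\,\ER(\chi_m)+o(n).
\]
The operational key error criterion is linked to $\ER$ of the shielded retained state by the private-state argument of Appendix~\ref{app:tests}.

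The core, and the main obstacle, is evaluating $\ER(\chi_m)$, or an upper bound on it, for the mixed noisy resource. I would show that its large-$m$ density $f(x)=\lim \ER(\chi_{m})$ along $m\approx x$ is concave and nondecreasing in $x$. Jensen's inequality then converts the average over sectors and uses into $n f(N)$. A candidate separable state comes from the Gaussian structure: decompose $\cG_{\tau,v}$ as pure loss composed with a quantum-limited amplifier, fix the input and output numbers $(N,y)$, and let $t$ photons be "transmitted coherently". The remaining $u=N-t$ input photons and $w=y-t$ output photons act as thermalized, distinguishable noise costing $g(u)+g(w)$. The optimizing $t$ solves the quadratic in Eq.~\eqref{eq:rateparams}, namely $t(L-t)=\tau N(N+1)$.

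The plan for the estimate itself is to compute the relative entropy to a sector-wise separable state via a combinatorial count of multinomial coefficients. Stirling asymptotics should reproduce the $\varphi$ terms, giving exactly $\cF_{\tau,v}(N)$, after which concavity of $\cF$ in $N$ must be checked. For $v\ge\tau$ the channel is entanglement breaking, so $\chi_m$ is separable and the bound is $0$. The hard points are showing that this count genuinely bounds $\ER$ of the mixed resource (not just of a pure-loss analogue) and establishing concavity of $\cF$. For the latter I would differentiate, using $\partial t/\partial N$ from the quadratic.
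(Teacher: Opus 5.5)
\textbf{Main gap: the reduction from the adaptive protocol to sector resources.} The step you call the standard stretching argument is where the paper's main work lies, and as written it fails.

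Exact fixed-number simulation applies only to inputs supported in one sector. An actual transmitted mode is generally a coherent superposition of photon numbers, correlated with both memories and chosen adaptively. Writing the protocol as LOCC on a mixture of $\bigotimes_i\chi_{m_i}$, weighted by the photon-number distribution, amounts to pinching every input in number. That changes the state and discards the entanglement carried by coherence between sectors. Moreover, the sector at use $i$ is correlated with the earlier history, so there is no fixed product resource to stretch onto.

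This is not a small correction. You set up sectors of the single transmitted mode $A_i$. Its $m$-photon sector is one-dimensional, so $\chi_m$ is a product state (the paper's $f_1(m)=0$). Your displayed inequality would then give rate zero, contradicting the positive hashing rates.

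Nontrivial resources require $s\ge2$ modes sharing a total number, with passive $U(s)$ covariance acting irreducibly on $\Sym^m\mathbb C^s$. The inter-sector coherence must also be paid for explicitly, through $\ER(\omega)\le\ER(\cP_B\omega)+H(K)$ with $H(K)\le g(s(\tau x+v))$. This cost is sublinear only because a single total count over the block is resolved. An adaptive protocol, however, provides no such block, since each input depends on earlier outputs.

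The missing idea is the replica construction. Apply Lemma~\ref{lem:block} to $s$ independent copies of the fixed one-use state $\rho^{(i)}_{X_iA_iY_i}$. Use the law of large numbers with uniform integrability to get $\EE f_s(M_s)/s\to\cF(x)$. Then regularize, obtaining $\ERi(X:BY)_\omega\le\ERi(XA:Y)_\rho+\cF(x)$ at each use. The telescope and the threshold-test output bound are then run for $\ERi$ rather than $\ER$. Your Jensen-over-sectors step would instead need a finite-$s$ bound $f_s(m)\le s\cF(m/s)$. No such bound is available for the noisy resources; only the asymptotic density of Lemma~\ref{lem:density} is.

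\textbf{Secondary issues.} The truncation step is also not justified as stated. With adaptive, non-i.i.d.\ inputs under only an unconditional mean constraint there is no typicality, and per-use cutoff errors accumulate over $n$ uses. A continuity bound in terms of $d_n$ also does not cover key protocols, whose retained shields have unbounded dimension. The paper needs no continuity: cutoffs are removed for a fixed state by lower semicontinuity of $\ER$, and $d$ enters only through $\ERi(\zeta_n)\ge(1-\delta)\log_2d-h_2(\delta)$.

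On the resource itself, any separable comparison yields an upper bound. The real difficulty is not whether your count ``genuinely bounds'' $\ER$, but computing $D(\rho\Vert\sigma)$ for the mixed noisy state. The paper does this by flagging Bob's output number $k$ and jointly diagonalizing the resource and the condensate-average comparison on the ladder subspaces of $C_+C_-$.

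Proving concavity by differentiating $\cF$ would be a legitimate alternative to the paper's route, provided your comparison yields exactly $\cF$. The paper instead derives the identity $\ERi(\omega_N)=\cF(N)$ from optimality and product additivity, then applies the gain lemma to a flagged Gaussian ensemble.
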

\noindent Note that the function $\cF_{\tau,v}$ is nonnegative, nondecreasing, and concave
on $[0,\infty)$, with $\cF_{\tau,v}(0)=0$. It is at most $g(N)$ and
approaches the corresponding PLOB bound for large $N$.
These properties follow from Lemmas~\ref{lem:density}
and~\ref{lem:concavity} and Appendices~\ref{app:limits}
and~\ref{app:highenergy}.

The statement applies to arbitrary adaptive inputs with quantum
memories. It neither restricts the protocol to Gaussian operations nor
asserts equality between the quantum and key capacities. The
energy-constrained capacities of the two quantum-limited boundary
channels are determined in Ref.~\cite{LossAmplifier}; the present
theorem specifically concerns positive excess noise.

\subsection{Specification of the bound to the three channels}\label{sec:three}
We now study the specification of the common bound for the three different channels of Table~\ref{tab:channels}.

For \emph{thermal loss}, we have
\begin{equation}
 \cF^{\rm loss}_{\eta,\nu}(N)=\cF_{\eta,(1-\eta)\nu}(N).
 \label{eq:loss-bound}
\end{equation}
For $0<\nu<\eta/(1-\eta)$, its infinite-energy PLOB limit is given by~\cite{PLOB}
\begin{equation}
 \Phi^{\rm loss}_{\eta,\nu}
 =-\log_2(1-\eta)-\nu\log_2\eta-g(\nu).
 \label{eq:loss-plob}
\end{equation}
Also note that, for $\nu\ge\eta/(1-\eta)$ all four capacities vanish.

For a \emph{noisy amplifier}, the added photon number contains both the
vacuum contribution and the thermal contribution. So we write
\begin{equation}
 \cF^{\rm amp}_{\kappa,\nu}(N)
 =\cF_{\kappa,(\kappa-1)(\nu+1)}(N),~~\nu>0.
 \label{eq:amp-bound}
\end{equation}
For $0<\nu<1/(\kappa-1)$, the PLOB limit is~\cite{PLOB}
\begin{equation}
 \Phi^{\rm amp}_{\kappa,\nu}
 =(\nu+1)\log_2\kappa-\log_2(\kappa-1)-g(\nu).
 \label{eq:amp-plob}
\end{equation}
Again, for $\nu\ge1/(\kappa-1)$, all four capacities vanish.

Finally, for \emph{additive noise}, set $\tau=1$ and $v=\xi$. For $0<\xi<1$,
Eq.~\eqref{eq:F} can be re-written as follows:
\begin{equation}
 \begin{aligned}
 \cF^{\rm add}_\xi(N)={}&\varphi(N-u_\xi)
       +\varphi(N+u_\xi+\xi+1)\\
 &-\varphi(N)-\varphi(N+\xi)\\
 &-g(u_\xi)-g(u_\xi+\xi),
 \end{aligned}
 \label{eq:add-bound}
\end{equation}
where we have introduced
\begin{equation}
u_\xi=\frac{\sqrt{(1+\xi)^2+4\xi N}-(1+\xi)}2.
 \label{eq:add-u}
\end{equation}
Its infinite-energy PLOB limit is~\cite{PLOB}
\begin{equation}
 \Phi^{\rm add}_\xi=-\log_2\xi+(\xi-1)\log_2e,
 \label{eq:add-plob}
\end{equation}
and all four capacities vanish for $\xi\ge1$. 

In each non-entanglement-breaking
case, monotonicity gives $\cF(N)\le\Phi$ at finite energy. The three
large-energy expansions are proved in Appendix~\ref{app:highenergy}.

\subsection{Proof sketch}\label{sec:finite}
We prove Theorem~\ref{thm:gaussian} by first establishing
finite-error converse bounds. The argument combines an upper
bound on the entanglement generated by an adaptive protocol
with a lower bound on the entanglement required by its
approximate target state. The capacity bounds then follow
by taking the vanishing-error limit.

Let $\ER$ denote relative entropy of entanglement and $\ERi$ its
regularization, as explicitly defined in Appendix~\ref{app:ree}.
The central tool is sector teleportation simulation, recalled
from Ref.~\cite[Appendices B 1 and E 1]{LossAmplifier} and applied
to the noisy channels in Appendix~\ref{app:simulation}.
For a block of $s$ input modes, consider the finite-dimensional
sector $\cH_{s,m}$ containing exactly $m$ photons in total.
Let $\Phi_{s,m}$ be a maximally entangled state between a reference
$R$ and a copy of this input sector. The associated resource is
\begin{equation}
 \mathcal J_{s,m}
 =(\mathrm{id}_R\otimes\cG^{\otimes s})(\Phi_{s,m}).
\end{equation}
Alice holds the reference share and Bob holds the channel-output
share.


Passive mode rotations act irreducibly within $\cH_{s,m}$, and
the channel is covariant under these rotations. These properties
provide a finite-outcome LOCC map $\mathcal T_{s,m}$ satisfying
\begin{equation}
 \begin{aligned}
 &\mathcal T_{s,m}
   (\rho_{XA^sY}\otimes\mathcal J_{s,m})\\
 &~~=
 (\mathrm{id}_X\otimes\cG^{\otimes s}\otimes\mathrm{id}_Y)
 (\rho_{XA^sY}),
 \end{aligned}
 \label{eq:simulation-main}
\end{equation}
for every input $\rho_{XA^sY}$ whose optical subsystem $A^s$
is supported on $\cH_{s,m}$. The same map works for all such
inputs, including arbitrary correlations with the memories
$X,Y$; its identity action on these memories is implicit.
Alice performs a suitable Bell measurement on $A^s$ and her
resource share, communicates the outcome, and Bob applies
the corresponding passive unitary. Thus one copy of
$\mathcal J_{s,m}$ and LOCC reproduce the channel output exactly.

Since the resource is initially independent of the input,
LOCC monotonicity and subadditivity bound the entanglement
increase by $f_s(m):=\ER(\mathcal J_{s,m})$. In this way, the
transmitted photon number enters the converse through the
entanglement of the simulation resource.



The resource calculation in Appendix~\ref{app:resource},
specifically Lemma~\ref{lem:density}, establishes
\begin{equation}
 \lim_{s\to\infty}\frac{f_s(m_s)}{s}
 =\cF_{\tau,v}(N)
 ~~\text{whenever}~~
 \frac{m_s}{s}\longrightarrow N.
\end{equation}
Thus $\cF_{\tau,v}$ in Eq.~\eqref{eq:F} is the asymptotic
resource entanglement per mode. Passing from this sector
statement to a general input requires accounting for coherence
between different total photon numbers.

Consider an arbitrary correlated input $\rho_{XAY}$ with
$\ER(XA:Y)_\rho<\infty$ and transmitted mean
$x=\Tr(\hat n_A\rho_A)<\infty$, and let $\omega_{XBY}$ be its
channel output. Apply the block argument of
Appendix~\ref{app:block} to $s$ independent replicas of this fixed
state. Sector simulation bounds the contribution within each
photon-number sector by its resource cost. The additional
contribution from coherence between sectors is bounded by the
entropy of Bob's total output photon count, which is at most
$g(s(\tau x+v))$. After division by $s$, this overhead vanishes.
Meanwhile, the averaging argument in Appendix~\ref{app:replicas}
shows that the average sector cost per mode converges to
$\cF_{\tau,v}(x)$. Taking the regularization limit therefore
gives Lemma~\ref{lem:gain}:
\begin{equation}
 \ERi(X:BY)_\omega
 \le\ERi(XA:Y)_\rho+\cF_{\tau,v}(x).
 \label{eq:gain-main}
\end{equation}
The replicas are an auxiliary proof construction for a fixed
state; this bound applies to the actual correlated input at
each use of an adaptive protocol.

To combine uses with different input energies, we also require
concavity and monotonicity of $\cF_{\tau,v}$. These properties
are established in Lemma~\ref{lem:concavity} of
Appendix~\ref{app:concavity}. The same lemma proves the additional
identity $\cF_{\tau,v}(N)=\ERi(\omega_N)$, where $\omega_N$ is
obtained by transmitting one arm of a two-mode squeezed vacuum
with mean photon number $N$ in each arm.

Starting from a separable state, iterating
Eq.~\eqref{eq:gain-main} and using LOCC monotonicity between
channel calls bounds the entanglement of the final retained
state $\zeta_n$. Concavity, monotonicity, and the energy
constraint then give
\begin{equation}
 \ERi(\zeta_n)\le\sum_{i=1}^n\cF_{\tau,v}(x_i)
                   \le n\cF_{\tau,v}(N).
 \label{eq:adaptive}
\end{equation}
The adaptive argument is detailed in Appendix~\ref{app:adaptive}.

Equation~\eqref{eq:adaptive} limits the entanglement available
at the protocol output. To obtain an operational bound, we
compare it with the entanglement required by an approximate
target of size $d$. The output-test bound recalled in
Appendix~\ref{app:tests} gives
\begin{equation}
 \ERi(\zeta_n)\ge(1-\delta)\log_2d-h_2(\delta),
\end{equation}
where $\delta$ bounds the failure probability of the corresponding
entanglement or privacy test
\cite[Appendices C 2 and D 2]{LossAmplifier}.
For a quantum target, $\delta=\varepsilon$.
For a key protocol, the input-preserving reduction of
Ref.~\cite[Appendix D 3]{LossAmplifier} supplies a retained quantum
implementation, including local shields, whose privacy-test
failure is at most $\delta=2e-e^2$. This reduction preserves
every transmitted marginal and hence every $x_i$.
Combining these output requirements with
Eq.~\eqref{eq:adaptive} yields the following lemma.

\begin{lemma}[Finite-error converse bounds]\label{lem:finite}
An $n$-use protocol with target size $d$ satisfies
\begin{align}
 (1-\varepsilon)\log_2d
 &\le\sum_{i=1}^n\cF_{\tau,v}(x_i)+h_2(\varepsilon),
 \label{eq:quantum-error}\\
 (1-e)^2\log_2d
 &\le\sum_{i=1}^n\cF_{\tau,v}(x_i)+h_2(2e-e^2),
 \label{eq:key-bound}
\end{align}
for quantum entanglement infidelity $\varepsilon$ and operational
key error $e$, respectively. Here $h_2$ is binary entropy, and
each resource sum is at most $n\cF_{\tau,v}(N)$.
\end{lemma}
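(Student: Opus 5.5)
The proof will chain an upper bound on the entanglement generated by the protocol with a lower bound on the entanglement required by its output. For the upper bound I would start from the separable initial state, for which $\ERi=0$, and apply Eq.~\eqref{eq:gain-main} (Lemma~\ref{lem:gain}) at each channel use $i$ with input $\rho^{(i)}_{X_iA_iY_i}$ and transmitted mean $x_i$. This gives
\begin{equation*}
 \ERi(X_i:B_iY_i)_{\omega^{(i)}}\le\ERi(X_iA_i:Y_i)_{\rho^{(i)}}+\cF_{\tau,v}(x_i).
\end{equation*}
Between uses, the LOCC steps do not increase $\ERi$ across the Alice--Bob cut. Here I would check that the map from $\omega^{(i)}$ to $\rho^{(i+1)}$ is an LOCC map: it relabels $B_i$ into Bob's memory and prepares $A_{i+1}$ on Alice's side, and I would use the monotonicity of $\ERi$ for finite-round, countable-outcome LOCC recalled in Appendix~\ref{app:ree}. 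Inducting over the $n$ uses and the final LOCC then yields $\ERi(\zeta_n)\le\sum_i\cF_{\tau,v}(x_i)$.

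To get the form stated in the lemma, I would use the concavity of $\cF_{\tau,v}$ (Lemma~\ref{lem:concavity}) via Jensen: $\sum_i\cF_{\tau,v}(x_i)\le n\cF_{\tau,v}(\bar x)$ with $\bar x=n^{-1}\sum_ix_i\le N$. Monotonicity of $\cF_{\tau,v}$ then gives $n\cF_{\tau,v}(N)$. In the entanglement-breaking case $\cF\equiv0$, and the same chain applies trivially.

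For the lower bound I would invoke the output-test bound of Appendix~\ref{app:tests}, $\ERi(\zeta_n)\ge(1-\delta)\log_2d-h_2(\delta)$. For the quantum target I would trace $\zeta_n$ down to $\sigma_{Q_AQ_B}$, which is a local operation and so only lowers $\ERi$, and set $\delta=\varepsilon$; combining the two bounds gives Eq.~\eqref{eq:quantum-error}. For key and private protocols I would first replace the protocol by the input-preserving retained quantum implementation with shields, as in Ref.~\cite[Appendix D 3]{LossAmplifier}. That replacement has the same transmitted marginals, so the $x_i$ and the upper bound are unchanged, and it has privacy-test failure $\delta\le2e-e^2$.

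Substituting $\delta=2e-e^2$ gives $(1-\delta)=(1-e)^2$ and $h_2(\delta)=h_2(2e-e^2)$. If $\delta$ is only an upper bound, I would rely on the test bound being monotone in $\delta$ over the relevant range, or else use the lower bound on the failure probability directly. This yields Eq.~\eqref{eq:key-bound}.

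The main obstacle is justifying the adaptive iteration rigorously in infinite dimensions. Each $\ERi$ along the chain has to be finite for Lemma~\ref{lem:gain} to apply. I would secure this inductively: the initial value is $0$, and each step adds at most $\cF_{\tau,v}(x_i)\le g(x_i)<\infty$. I would also need the regularized quantity to be monotone under LOCC whose number of rounds grows with $n$. Both points are handled by the tools in the appendices, which are taken as given.
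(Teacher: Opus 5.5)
Your proposal is correct and follows the paper's proof essentially step for step. Like the paper, you iterate Lemma~\ref{lem:gain} with LOCC monotonicity from a separable start, use concavity and monotonicity of $\cF_{\tau,v}$, and combine the result with the output-test bound, taking $\delta=\varepsilon$ for quantum targets and the input-preserving key reduction with $\delta=2e-e^2$ for keys.

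There are two small points. First, Lemma~\ref{lem:gain} requires finite \emph{unregularized} REE at its input. The paper verifies this through Lemma~\ref{lem:block} at $s=1$, where $f_1\equiv0$. Your induction on $\ERi$ also works, but it needs the extra remark $\ER(\rho)\le\ER(\rho^{\otimes r})$, so that $\ERi(\rho)<\infty$ implies $\ER(\rho)<\infty$. Second, in the entanglement-breaking region Lemma~\ref{lem:gain} is not available, because it was proved only for $v<\tau$. There the zero gain comes instead from the measure-and-prepare LOCC simulation of Appendix~\ref{app:limits}.
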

In Appendix~\ref{app:operational} we provide details on this conversion from
entanglement accounting to operational rates. Finally, dividing by $n$
and taking vanishing error gives the weak-converse capacity
bounds of Theorem~\ref{thm:gaussian}.

\begin{figure*}[t]
\centering
\includegraphics[width=0.86\textwidth]{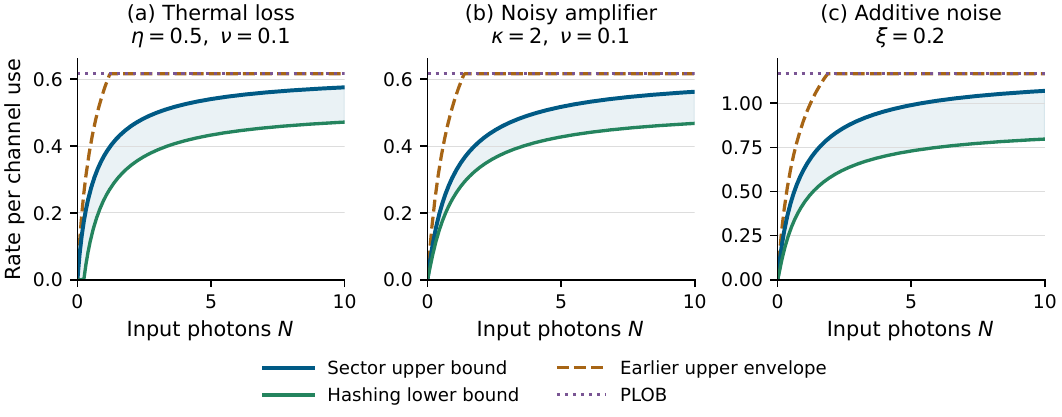}
\vspace{-0.3cm}
\caption{Bounds on energy-constrained capacities for (a) thermal loss with $\eta=1/2$,
$\nu=0.1$; (b) noisy amplification with $\kappa=2$, $\nu=0.1$; and
(c) additive noise with $\xi=0.2$. The blue solid curve is our sector
upper bound, the green solid curve is the hashing lower bound, the
orange dashed curve is $U_{\rm prev}$ of Eq.~\eqref{eq:previous}, and
the purple dotted line is the energy-independent PLOB bound $\Phi$. Shading
marks the gap between the displayed achievable rate and sector bound.
The envelope and PLOB curves coincide where PLOB is the smallest of
the three earlier bounds.}
\label{fig:bounds}
\end{figure*}

\subsection{Achievable benchmarks and comparisons}\label{sec:comparison}
The Gaussian output $\omega_N$ has entropies $S(R)=g(N)$,
$S(B)=g(y)$, and $S(RB)=g(u)+g(w)$. Hashing in the better communication
direction achieves~\cite{DevetakWinter,PGPBL,RCI}
\begin{equation}
 R_{\rm hash}(N)=
 \big[\max\{g(N),g(y)\}-g(u)-g(w)\big]_+,
 \label{eq:hashing}
\end{equation}
where $[z]_+=\max\{z,0\}$. The larger marginal is always Bob's for the
noisy amplifier and additive-noise channel. Consequently,
\begin{equation}
 R_{\rm hash}(N)\le Q_2=D_2\le P_2=K_2\le\cF_{\tau,v}(N).
 \label{eq:bracket}
\end{equation}
This is an explicit achievable benchmark, without an assertion that
hashing is optimal among available distillation protocols. We also evaluate the balanced Gaussian squashing constructions of
Refs.~\cite{DSW,Goodenough}, denoted $U_{\rm DSW}$ and $U_{\rm GEW}$, and
compare with
\begin{equation}
 U_{\rm prev}(N)=\min\{U_{\rm DSW}(N),U_{\rm GEW}(N),\Phi\}.
 \label{eq:previous}
\end{equation}

Figure~\ref{fig:bounds} displays all three channel families (Appendix~\ref{app:numerics} specifies the covariance-matrix evaluations.) From the figure, we can appreciate the large improvement achieved by our tighter upper bound.



\section{Conclusions}\label{sec:conclusion}
We have obtained a common energy-constrained adaptive weak converse for
thermal-loss, noisy amplifier, and additive-noise Gaussian channels.
The input-energy dependence is retained through exact photon-sector
simulation resources. Their optimal separable comparisons and
asymptotic entanglement density produce a concave energy function that
bounds the entanglement increase at each use, including for arbitrary
correlated memories. The same function vanishes in the
entanglement-breaking region and recovers the corresponding PLOB
bound at infinite energy.

The finite-energy examples improve the evaluated upper-bound
benchmarks and leave an explicit gap to hashing. Closing this gap
requires further achievability results or a tighter converse.
The present claims concern vanishing-error energy-constrained
capacities. Public vacuum dilution under an unconditional mean-energy
constraint, analyzed in Ref.~\cite[Appendix F 1]{LossAmplifier},
prevents inferring an
energy-dependent strong converse merely from the weak-converse bound.
The exact strong-converse thresholds of the noisy channels are not
determined here. Finite-blocklength refinements and tighter photon-number
constraints provide further directions for the sector method.

These results supply energy-dependent benchmarks for the design and
assessment of bosonic entanglement-distribution and key-generation
protocols in the presence of excess noise. Comparing achievable rates
with the bounds at the same input budget identifies parameter regimes
where further improvements could have the greatest value. At finite
blocklength, Lemma~\ref{lem:finite} also limits the entanglement or key
size compatible with a prescribed photon budget and error tolerance.
More broadly, the sector-simulation framework connects physical input
constraints to resource entanglement, suggesting a route to analogous
converse bounds for other covariant channels.

\par\medskip
\noindent\textbf{AI statement.}
AI's ChatGPT (GPT-6 Astra model) was used during the
preparation of this manuscript to generate candidate mathematical
derivations and proofs, assist with their development, and improve the presentation of the manuscript. The tool was directed through
author-specified prompts concerning the mathematical assumptions,
derivations, and statements under consideration. All AI-assisted
mathematical results, derivations, and proofs were independently
examined, verified, and revised as necessary by the author, who assumes responsibility for the accuracy and scientific content of the work.

\onecolumngrid
\appendix

\section{Structure of the proof}\label{app:structure}
The proof separates the physical channel simulation, the entanglement
calculation for its resource, and the conversion to communication rates.
Figure~\ref{fig:proof-flow} shows how these ingredients fit together.
The simulation and output tests are inherited from
Ref.~\cite{LossAmplifier}; the noisy-resource evaluation and its use in
the energy-dependent converse are developed below. The lemma numbers
follow their order of appearance in the manuscript, while the arrows
show the order in which their conclusions are used.

\begin{figure}[!ht]
\centering
\begin{tikzpicture}[
  box/.style={draw=MidnightBlue!75,fill=MidnightBlue!4,rounded corners=2pt,
    align=center,font=\small,inner sep=6pt,minimum height=1.15cm},
  wide/.style={box,text width=7.7cm},
  branch/.style={box,text width=5.5cm},
  link/.style={-{Stealth[length=2.2mm]},thick,draw=MidnightBlue!80},
  testbox/.style={box,draw=ForestGreen!75!black,fill=ForestGreen!5,
    text width=2.9cm,font=\footnotesize}
]
\node[wide] (sim) at (0,0)
  {\textbf{Exact sector teleportation simulation}\\
   Fixed input number $m$ in $s$ modes; resource $\mathcal J_{s,m}$\\
   Appendix~\ref{app:simulation}};
\node[branch] (spectrum) at (-3.4,-1.9)
  {\textbf{Evaluate the noisy resource}\\
   Optimal comparison and product additivity\\
   Lemma~\ref{lem:optimal}; Appendix~\ref{app:resource}};
\node[branch] (density) at (-3.4,-3.8)
  {\textbf{Take the sector-density limit}\\
   $f_s(m_s)/s\to\cF_{\tau,v}(N)$\\
   Lemma~\ref{lem:density}};
\node[branch] (block) at (3.4,-3.8)
  {\textbf{Treat arbitrary correlated inputs}\\
   Sector cost plus count entropy\\
   Cutoffs removed\\
   Lemma~\ref{lem:block}; Appendix~\ref{app:block}};
\node[wide] (gain) at (0,-5.7)
  {\textbf{Replica limit: one-use entanglement gain}\\
   Regularize a fixed state; the count overhead vanishes\\
   Lemma~\ref{lem:gain}; Appendix~\ref{app:replicas}};
\node[wide] (concavity) at (0,-7.7)
  {\textbf{Gaussian identity and concavity}\\
   $\ERi(\omega_N)=\cF_{\tau,v}(N)$; classical energy flags\\
   Lemma~\ref{lem:concavity}; Appendix~\ref{app:concavity}};
\node[wide] (adaptive) at (0,-9.7)
  {\textbf{Adaptive entanglement accounting}\\
   $\ERi(\zeta_n)\le\sum_i\cF_{\tau,v}(x_i)\le n\cF_{\tau,v}(N)$\\
   Appendix~\ref{app:adaptive}};
\node[testbox] (tests) at (6.1,-11.7)
  {\textbf{Output tests}\\
   Entanglement tests\\
   Privacy tests\\
   Appendix~\ref{app:tests}};
\node[wide] (finite) at (0,-11.7)
  {\textbf{Finite-error converse bounds}\\
   Available entanglement versus target requirement\\
   Lemma~\ref{lem:finite}; Appendix~\ref{app:operational}};
\node[wide,fill=ForestGreen!6,draw=ForestGreen!75!black] (theorem)
  at (0,-13.7)
  {\textbf{Energy-constrained weak converse}\\
   Divide by $n$ and let the operational error vanish\\
   Theorem~\ref{thm:gaussian}};
\draw[link] (sim.south west) -- (spectrum.north);
\draw[link] (sim.south east) -- (block.north);
\draw[link] (spectrum) -- (density);
\draw[link] (density.south) -- (gain.north west);
\draw[link] (block.south) -- (gain.north east);
\draw[link] (gain) -- (concavity);
\draw[link,dashed] (spectrum.west) -- (-7,-1.9)
  -- (-7,-7.7) -- (concavity.west);
\draw[link] (concavity) -- (adaptive);
\draw[link] (adaptive) -- (finite);
\draw[link,draw=ForestGreen!75!black] (tests.west) -- (finite.east);
\draw[link] (finite) -- (theorem);
\end{tikzpicture}
\caption{Dependencies of the weak-converse proof. The dashed arrow
records the additional use of product additivity in the Gaussian-state
density and concavity argument. The sector and replica limits are
auxiliary mathematical limits; the adaptive protocol itself may have
correlated inputs at every use. Appendix~\ref{app:benchmarks} then
evaluates channel limits and the compared achievable and upper bounds.}
\label{fig:proof-flow}
\end{figure}
\clearpage

\section{Reduction to the noisy-resource calculation}\label{app:tools}
The converse has two distinct ingredients. An operational test tells us
how much entanglement is required at the end of a successful protocol.
A simulation argument bounds how much entanglement each transmission
can add. Photon-number resolution connects the simulation to the
transmitted energy. This appendix recalls these ingredients in the form
needed here, using the protocol and error conventions of
Sec.~\ref{sec:model}.

The simulation, truncation, and target-testing results are taken from the
common pure-loss and quantum-limited amplifier analysis in
Ref.~\cite{LossAmplifier}. We recall their statements and explain their
role, while referring to that paper for their general proofs. The new
channel-dependent work begins in Appendix~\ref{app:resource}, where we
evaluate the mixed resources produced by excess noise.

\subsection{The entanglement required at the protocol output}\label{app:tests}\label{app:ree}
The quantity propagated through the converse is regularized relative
entropy of entanglement (REE). For a bipartite normal state, define
\begin{equation}
 \ER(A:B)_\rho=\inf_{\sigma\in\SEP(A:B)}D(\rho\Vert\sigma),
 ~~
 \ERi(\rho)=\lim_{r\to\infty}\frac{\ER(\rho^{\otimes r})}{r}.
 \label{eq:ree}
\end{equation}
Here $\SEP$ is the trace-norm closed convex hull of product states, and
$D(\rho\Vert\sigma)=\Tr[\rho(\log_2\rho-\log_2\sigma)]$ has its usual
support and extended-value interpretation. When $\ER(\rho)<\infty$,
subadditivity ensures that the limit exists and equals the infimum of
$r^{-1}\ER(\rho^{\otimes r})$
\cite{Vedral1997,VedralPlenio1998}. We use the normal-state formulation
recalled in Ref.~\cite[Appendix A 3]{LossAmplifier}.
Both $\ER$ and $\ERi$ are nonincreasing under the admissible LOCC.
For $\ERi$, this follows by applying an LOCC map independently to every
copy before taking the limit in Eq.~\eqref{eq:ree}.

Here a normal state is represented by a positive trace-class operator
of unit trace. The closure in the definition of $\SEP$ matters in
infinite dimensions: restricting to a particular finite or countable
product decomposition would unnecessarily shrink the comparison set
\cite{HSWInfinite}. All bipartitions below are laboratory bipartitions.
For example, $XA:Y$ includes the transmitted system on Alice's side
before a channel call, whereas $X:BY$ includes its output on Bob's side.

The role of REE is to quantify an entanglement resource that cannot
increase under local processing and public communication, in the
standard entanglement-monotone framework
\cite{VidalMonotones,PlenioVirmani,HorodeckiReview}. For an LOCC map
$\Lambda$, every separable comparison $\sigma$ remains separable, and
relative-entropy data processing~\cite{LindbladDPI} gives
$D(\Lambda(\rho)\Vert\Lambda(\sigma))\le D(\rho\Vert\sigma)$.
Taking the infimum proves monotonicity. Subadditivity follows by using
a product of separable comparisons for a product input. Regularization
then allows the comparison for many copies to correlate those copies;
we shall prove additivity only for the specific resource family where
it is needed. None of these definitions identifies REE with an
achievable communication rate.

The relevant operational statement is a lower bound on $\ERi$ at the
protocol output. For an entanglement target, the test is the projection
onto a maximally entangled state of local dimension $d$. For a private
state, it is the corresponding privacy-test projector, which also acts
on the local shields. If a retained state $\zeta$ of finite REE passes
either size-$d$ test with failure probability at most $\delta$, then
\begin{equation}
 \ERi(\zeta)\ge(1-\delta)\log_2d-h_2(\delta).
 \label{eq:test-bound}
\end{equation}
The quantum and private-state versions are proved in
Ref.~\cite[Appendices C 2 and D 2]{LossAmplifier}, respectively; the latter
allows arbitrary separable shield spaces. In this last statement, ``separable'' describes the Hilbert spaces
(they admit countable orthonormal bases), rather than a restriction
that the shield state be unentangled. The shield can contain arbitrary
local auxiliary registers retained by the protocol.

For clarity, a private state of key size $d$ has the form
\begin{equation}
 \gamma=U(\Phi_d\otimes\theta_{S_AS_B})U^\dagger,
 ~~
 U=\sum_{a,b=0}^{d-1}\proj{ab}_{K_AK_B}\otimes U_{ab},
 \label{eq:private-state-explained}
\end{equation}
where $S_A,S_B$ are local shields, $\theta$ is a shield state, and
each $U_{ab}$ is unitary on the joint shield. Measuring the key
registers gives a perfectly correlated uniform key independent of a
purifying eavesdropper~\cite{HHHO}. The associated privacy test is
$\Pi_\gamma=U(\Phi_d\otimes I_{S_AS_B})U^\dagger$.
Its acceptance probability on a separable state is at most $1/d$,
as recalled in the cited companion paper. The untwisted test is simply
$\Phi_d$: on a product vector its overlap is
$d^{-1}|\sum_a x_a y_a|^2\le1/d$ by Cauchy--Schwarz.
This explains why the target size enters the converse through
$\log_2 d$, even when the shields have no dimension bound.

It is useful to recall why the bound involves regularized REE.
Every separable state passes one test with probability at most $1/d$.
More generally, every state separable across the aggregate Alice--Bob
cut passes any specified $\ell$ tests with probability at most
$d^{-\ell}$, even if its different copies are correlated. Test $q$
independent copies of $\zeta$, accepting when at least
$\ell=\lfloor(1-\delta-\epsilon)q\rfloor$ tests pass, where
$0<\epsilon<1-\delta$. Acceptance on $\zeta^{\otimes q}$ tends to one.
For a separable comparison, a union bound gives acceptance at most
$\binom q\ell d^{-\ell}$. Binary relative-entropy data processing
therefore gives the asymptotic lower bound
\[
 \liminf_{q\to\infty}\frac{\ER(\zeta^{\otimes q})}{q}
 \ge (1-\delta-\epsilon)\log_2d-h_2(1-\delta-\epsilon).
\]
Letting $\epsilon\downarrow0$ yields Eq.~\eqref{eq:test-bound}.
The endpoint $\delta=1$ is trivial. This explains the threshold-test
step in the cited proof: a bound on a single test alone would only
control unregularized REE. The tests are mathematical tools for
comparing states and need not themselves be LOCC measurements.

For a quantum target, the entanglement infidelity is precisely the
test failure, so we take $\delta=\varepsilon$. A classical key requires
one additional reduction. The measured key state is a classical state;
we must apply the privacy test to a retained quantum implementation
before the final key measurement, rather than to that measured state.
The construction in
Ref.~\cite[Appendix D 3]{LossAmplifier} retains local purifications,
discarded systems, and Kraus registers. Tracing the added registers
recovers the original protocol at every stage.

We use the squared fidelity
$F(\rho,\sigma)=\norm{\sqrt\rho\sqrt\sigma}_1^2$.
The trace-distance relation
$1-\sqrt{F(\rho,\sigma)}\le\tfrac12\norm{\rho-\sigma}_1$
\cite{FuchsVanDeGraaf}, together with the purification relation for
fidelity~\cite{Uhlmann}, explains the square appearing below.
The retained-implementation construction applies this relation to
the ideal key comparison and its quantum lift. Keeping local
purifying registers is important: the measured classical key alone
is separable and would not provide the quantum state to which the
privacy-test estimate applies.

With the operational trace-distance error $e$ of
Eq.~\eqref{eq:key-error}, the retained state has fidelity at least
$(1-e)^2$ to a private state and consequently passes its privacy test
with probability at least $(1-e)^2$ \cite{HHHO}. Thus
$\delta=1-(1-e)^2=2e-e^2$. Subsequent operations leave the additional
registers unused, so every transmitted optical marginal and every
unconditional mean $x_i$ are unchanged. Public records still reach Eve;
the channel environment and its thermal purification remain in
$E_{\rm all}$. The reduction therefore uses the original energy budget,
without assuming any bound on the shield dimension or its entropy.

\subsection{The resource supplied at a fixed transmitted photon number}\label{app:simulation}
For a block of $s$ transmitted modes with total photon number $m$, let
\begin{equation}
 \cH_{s,m}=\operatorname{span}\{\ket{\bm r}:|\bm r|=m\},~~
 d_{s,m}=\binom{s+m-1}{m},~~
 \ket{\Phi_{s,m}}=\frac{1}{\sqrt{d_{s,m}}}
       \sum_{|\bm r|=m}\ket{\bm r}_R\ket{\bm r}_{A^{\prime s}}.
 \label{eq:sector}
\end{equation}
The constraint fixes the total number, not its distribution among modes:
all occupations $\bm r=(r_1,\ldots,r_s)$ with $\sum_i r_i=m$ are
included. A pure-state symbol without a ket denotes its projector.
For $\cG=\cG_{\tau,v}$, define
\begin{equation}
 \mathcal J_{s,m}=(\id_R\otimes\cG^{\otimes s})(\Phi_{s,m}),
 ~~
 \mathcal J_{s,m}=\bigoplus_{k\ge0}p_k\rho_{s,m,k}.
 \label{eq:J}
\end{equation}
The direct sum resolves Bob's total output number $k$.
The states $\rho_{s,m,k}$ are normalized conditional states; the
probabilities and states depend on $s,m,\tau,v$, which are suppressed
when fixed. Alice's resource share has $m$ photons, while Bob's mean
is $\tau m+sv$. The auxiliary block $A^{\prime s}$ defines the resource;
it is not an additional transmission in the actual communication
protocol.

For example, $s=2$ and $m=1$ give the two-dimensional space spanned
by $\ket{10}$ and $\ket{01}$. Its resource is obtained from
\[
 \ket{\Phi_{2,1}}=
 \frac{\ket{10}_R\ket{10}_{A^{\prime2}}+
       \ket{01}_R\ket{01}_{A^{\prime2}}}{\sqrt2}.
\]
Thus fixing total number leaves a nontrivial quantum system encoded
in the distribution of photons among modes. The input sector is
finite-dimensional, while the noisy output may occupy arbitrarily
many number sectors. This distinction permits an exact finite
measurement at the input without imposing an output photon cutoff
on the simulation itself.

The construction uses the measurement, classical communication, and
conditional correction structure of quantum teleportation
\cite{BennettTeleportation,BraunsteinKimble}. Here the matching
corrections are supplied by the channel's passive-mode covariance
and by the irreducibility of its action on the fixed-number space.
The exactness statement is therefore tied to that input sector.

The simulator is imported from Ref.~\cite[Appendix B 1]{LossAmplifier},
with its full Fock-space output implementation in
Ref.~\cite[Appendix E 1]{LossAmplifier}. Its covariance mechanism comes
from the irreducible-representation construction of Ref.~\cite{GBW}.
We verify the hypotheses needed for the noisy channels here.
The passive transformation $U\in U(s)$ acts irreducibly on the
$m$-photon space through $W_m(U)=\Sym^m U$. The channel satisfies
\begin{equation}
 \cG^{\otimes s}(W_m(U)TW_m(U)^\dagger)
   =\Gamma(U)\cG^{\otimes s}(T)\Gamma(U)^\dagger,
 ~~ \Gamma(U)=\bigoplus_{k\ge0}\Sym^k U.
 \label{eq:covariance}
\end{equation}
Identical thermal environment states are invariant under passive mode
rotations. For an amplifier, the conjugate rotation on its environmental
modes gives the same system covariance. For additive noise, covariance
follows from invariance of the isotropic displacement distribution.
It also follows for all three families from the two quantum-limited
factors in Eq.~\eqref{eq:decomposition}.

The action of the imported simulator can be summarized without
repeating its construction. Alice performs a finite Bell POVM on the
actual input and her resource share. Its normalized Bell vectors are
of the form $(W_m(U_\alpha)\otimes I)\ket{\Phi_{s,m}}$, with positive
weights $w_\alpha$ obeying
\begin{equation}
 \sum_\alpha d_{s,m}^{\,2}w_\alpha
 (W_m(U_\alpha)\otimes I)\Phi_{s,m}
 (W_m(U_\alpha)^\dagger\otimes I)=I.
 \label{eq:recalled-bell-povm}
\end{equation}
Irreducibility supplies this resolution of the identity; its
finite-outcome form is established in the cited reference.
Outcome $\alpha$ produces the desired channel output conjugated by
$\Gamma(U_\alpha)^\dagger$, with weight $w_\alpha$. After Alice announces
$\alpha$, Bob applies $\Gamma(U_\alpha)$. Every corrected branch is
therefore $w_\alpha$ times the desired state. Although the noisy output
contains arbitrarily many photons, $\Gamma(U_\alpha)$ is a bounded
unitary on the full output Fock space.

Consequently the same finite-outcome LOCC map gives
\begin{equation}
 \mathcal T_{s,m}(\rho_{XA^sY}\otimes\mathcal J_{s,m})
   =(\id_X\otimes\cG^{\otimes s}\otimes\id_Y)(\rho_{XA^sY})
 \label{eq:simulation}
\end{equation}
for every input supported on $\cH_{s,m}$, including arbitrary
correlations with memories $X,Y$ on separable Hilbert spaces.
Here $X,A^s$ initially belong to Alice and $Y$ to Bob.
In the simulation, Alice receives the reference share $R$ and Bob the
output share of the independent resource. LOCC monotonicity and
subadditivity imply that
\[
 f_s(m):=\ER(\mathcal J_{s,m})
\]
bounds the entanglement increase at that fixed input number.
This is where the physical input number enters the resource accounting.
The amount $f_s(m)$ must still be evaluated for the noisy resource;
that is the purpose of Appendix~\ref{app:resource}.

The accounting follows directly at the level of a correlated input:
before simulation the resource is independent of $\rho_{XA^sY}$,
so subadditivity bounds the initial REE by
$\ER(XA^s:Y)_\rho+f_s(m)$. The simulator is LOCC across the same
laboratory cut, and tracing out its ancillary registers cannot
increase this quantity. This gives the sector gain bound used in
Appendix~\ref{app:block}. No separability assumption is made on the
original input or on its correlations with either memory.

\subsection{Accounting for photon-number resolution}\label{app:counts}
A general input is not confined to one sector. We therefore need to
control both the average entanglement in its conditional sectors and
the coherence between them. For a local instrument with normalized
conditional states $\rho_j$, selective monotonicity gives
\begin{equation}
 \sum_jp_j\ER(\rho_j)\le\ER(\rho),~~
 \ER\!\left(\bigoplus_jp_j\rho_j\right)=\sum_jp_j\ER(\rho_j).
 \label{eq:flags}
\end{equation}
The second equality applies when the branches are locally
distinguishable, for example because a local register records $j$.
It says that an already classical local flag carries no additional
REE. The countable-outcome statements, with the closed separable set,
are recalled in Ref.~\cite[Appendix A 3]{LossAmplifier}.

The coherence cost is a different quantity. Let
$\cP_B(\omega)=\sum_kP_k^B\omega P_k^B$ denote pinching in Bob's total
output number, and let $K$ be that measurement outcome.
The geometric distribution maximizes entropy among nonnegative
integer distributions with a prescribed mean. Together with the
local-pinching estimate, this gives
\begin{equation}
 H(K)\le g(z)\quad\text{if }\EE K=z,~~
 \ER(\omega)\le\ER(\cP_B\omega)+H(K)
 \label{eq:pinching}
\end{equation}
for a finite-dimensional state in the second inequality.
The results are proved in
Ref.~\cite[Appendices A 4 and A 5]{LossAmplifier}.
The latter inequality bounds the entanglement that can be lost when
coherences between different output numbers are removed. It does not
assert that the measurement can be physically reversed.

The two contributions can be understood separately. If a classical
flag $j$ is already present, the parties can read it locally without
altering the conditional state, which explains the average in
Eq.~\eqref{eq:flags}. Pinching a coherent superposition of sectors
does alter the state. In finite dimensions, a block-diagonal
comparison $\sigma$ obeys
\[
 D(\omega\Vert\sigma)-D(\cP_B\omega\Vert\sigma)
 =S(\cP_B\omega)-S(\omega).
\]
The entropy increase is at most the entropy of the recorded outcome.
This is the origin of the additional $H(K)$ term; it is distinct
from the average entanglement in the flagged state.

The scalar entropy bound also has a direct derivation. For $z>0$,
the geometric law $r_k=(1+z)^{-1}[z/(1+z)]^k$ has mean $z$.
Any other law $p_k$ with that mean has
$D(p\Vert r)=g(z)-H(p)\ge0$. If $z=0$, the count is zero with
certainty. Thus the count-entropy estimate needs only a first
moment, without a bound on its variance.

For a block with mean input $sx$, the total output count has mean
$s(\tau x+v)$. Resolving this one count therefore costs at most
$g(s(\tau x+v))=O(\log s)$ for fixed $x$.
Resolving a separate count in every mode could instead cost a sum of
$s$ entropies and would not yield the needed sublinear correction.
We shall use the single-count inequality before taking any
regularization limit.

We also use the sector-dimension estimate
\begin{equation}
 \log_2d_{s,m}\le s\,g(m/s).
 \label{eq:sector-dimension-entropy}
\end{equation}
It is among the scalar estimates of the same reference.
One way to see its meaning is to consider the maximally mixed state
on $\cH_{s,m}$: its entropy is $\log_2d_{s,m}$, its one-mode means
are all $m/s$, and entropy subadditivity bounds it by the sum of the
thermal entropies at those means.

Appendix~\ref{app:block} first applies these facts at finite input,
output, and memory cutoffs. The output-first procedure of
Ref.~\cite[Appendix E 4]{LossAmplifier} then removes the cutoffs using
lower semicontinuity of unregularized REE
\cite[Corollary 20]{LamiShirokov}.
Only afterwards do replicas remove the count-entropy overhead in
Appendix~\ref{app:replicas}. Keeping these two limits separate is what
allows unrestricted memories without requiring their entropies to be
finite.

\section{Entanglement of noisy photon-sector resources}\label{app:resource}
Throughout this appendix,
$\max\{0,\tau-1\}<v<\tau$.
The resources in Eq.~\eqref{eq:J} have a fixed input number $m$,
but their output number is random. Even after conditioning on an
output number $k$, a noisy channel leaves a mixture of internal
attenuation and amplification processes. The task is therefore to
evaluate the entanglement of the actual mixed state
$\rho_{s,m,k}$, rather than to average the entanglement of
unobserved internal histories.

Use the decomposition
\begin{equation}
 G=1+v,~~ a=\frac{\tau}{G},~~
 \cG_{\tau,v}=\cA_{G,0}\circ\cL_a,~~ 0<a<1<G.
 \label{eq:decomposition}
\end{equation}
The first channel attenuates the mean to $aN$, and the second changes
it to $GaN+G-1=\tau N+v$. The decomposition reproduces the full
Gaussian channel action, not only this mean.

Indeed, the attenuation and amplification covariance maps are
$V\mapsto aV+(1-a)I$ and $V\mapsto GV+(G-1)I$.
Their composition is
$V\mapsto\tau V+(2v+1-\tau)I$, as in
Eq.~\eqref{eq:channel}. Such decompositions and their number-basis
Kraus representations are standard tools for Gaussian channels
\cite{HolevoOneMode,CarusoClassification,IvanKraus}.

No flag recording how many photons were lost and subsequently
created is made available to Alice or Bob. The only physical output
flag used in Eq.~\eqref{eq:J} is $k$.

To make this distinction concrete, let $\ell$ be the photon number
after the auxiliary attenuation. Its distribution for a fixed total
input $m$ is binomial. Conditional on $\ell$, the amplifier adds
photons with a negative-binomial distribution. Summing the internal
label gives the physical output probability
\begin{equation}
 p_k=\frac{1}{G^{s+k}}
 \sum_{\ell=0}^{\min(m,k)}
 \binom m\ell a^\ell(1-a)^{m-\ell}
 \binom{s+k-1}{k-\ell}v^{k-\ell}.
 \label{eq:output-number-probability}
\end{equation}
For each $\ell$, summing over $k\ge\ell$ gives one, and the
conditional output mean is $G\ell+sv$. Consequently
$\sum_kp_k=1$ and $\sum_k k p_k=\tau m+sv$.
The sum over $\ell$ is a way to evaluate a probability, not a
measurement record available to the communicating parties.
Entanglement must still be evaluated on the mixed conditional
state obtained after that sum.

The useful comparison parameter is
\begin{equation}
 z=\frac{a}{(1-a)v}
   =\frac{\tau}{(1+v-\tau)v}>1,~~
 z-1=\frac{(1+v)(\tau-v)}{(1+v-\tau)v}.
 \label{eq:witness-domain}
\end{equation}
Thus $z>1$ is exactly the non-entanglement-breaking condition within
the strictly noisy physical region. In particular, $z$ equals
$\kappa/[(\kappa-1)^2\nu(\nu+1)]$ for noisy amplification and
$1/\xi^2$ for additive noise. Its positivity will be used when
comparing the resource with all separable states.

As in Ref.~\cite[Appendices B 2 and E 2]{LossAmplifier}, take a
unit vector $u\in\mathbb C^s$ and its normalized condensate
\begin{equation}
 \ket{u;r}=\frac{1}{\sqrt{r!}}
       \left(\sum_{i=1}^s u_i a_i^\dagger\right)^r\ket{0^{\otimes s}}
 =\sum_{|\bm n|=r}
       \sqrt{\frac{r!}{\prod_i n_i!}}\prod_i u_i^{n_i}\ket{\bm n}.
 \label{eq:condensate-definition}
\end{equation}
The operators in this definition act on the laboratory in which the
condensate is prepared. The separable comparison is
\begin{equation}
 \sigma_{s,m,k}=
 \int\proj{\bar u;m}_R\otimes\proj{u;k}_B\,\dd u,
 \label{eq:sigma}
\end{equation}
where $\dd u$ is normalized uniform measure on unit vectors.
This integral defines a state in the closed separable set. It is
not an additional continuous-outcome operation in the communication
protocol. The conjugation of Alice's direction matches the
$\overline{\Sym^m U}\otimes\Sym^k U$ symmetry of the resource.

We first prove that this comparison is optimal, including for
products of resources. We then evaluate the entanglement per mode
when both the number of modes and the input number grow.

\subsection{Optimal conditional comparison and product additivity}
Recall $d_{s,r}=\binom{s+r-1}{r}$ and set $d_{s,-1}=0$.
\begin{lemma}[Conditional comparison and additivity]\label{lem:optimal}
For $s\ge2$, the separable state $\sigma_{s,m,k}$ in
Eq.~\eqref{eq:sigma}, defined for every $m,k\ge0$, satisfies
\begin{equation}
 \begin{aligned}
 \ER(\rho_{s,m,k})&=D(\rho_{s,m,k}\Vert\sigma_{s,m,k}),\\
 \ER(\mathcal J_{s,m})&=
       \sum_{k\ge0}p_kD(\rho_{s,m,k}\Vert\sigma_{s,m,k}).
 \end{aligned}
 \label{eq:res-exactJ}
\end{equation}
REE is additive on arbitrary finite products of these conditional
states and of the resources $\mathcal J_{s,m}$.
\end{lemma}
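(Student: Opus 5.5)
The plan is the standard symmetry-plus-witness argument for REE. First, I will use the covariance in Eq.~\eqref{eq:covariance}: the resource $\mathcal J_{s,m}$, and each conditional state $\rho_{s,m,k}$, is invariant under $\overline{W_m(U)}\otimes\Sym^kU$ for all $U\in U(s)$. By convexity of relative entropy and invariance of the separable set under local unitaries, the infimum in $\ER(\rho_{s,m,k})$ may be restricted to twirled separable states, i.e.\ to $\int(\overline{W_m(U)}\otimes\Sym^kU)\sigma(\cdot)^\dagger\dd U$. Since $\overline{\Sym^m}\otimes\Sym^k$ decomposes multiplicity-free into irreducible $U(s)$ components (for $s\ge2$, the components labelled by $j=0,\ldots,\min(m,k)$ with dimensions $d_{s,m-j}$-type differences, which is where the convention $d_{s,-1}=0$ enters), every invariant state is $\sum_j q_j P_j/\dim P_j$ with $P_j$ the isotypic projectors. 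Both $\rho_{s,m,k}$ and $\sigma_{s,m,k}$ are of this form. Thus $D(\rho\Vert\sigma)$ reduces to a classical relative entropy between the weight vectors $(\lambda_j)$ and $(q_j)$, and I must show that the condensate weights $(q^{\ast}_j)$ of $\sigma_{s,m,k}$ minimise it over the weights attainable by separable invariant states.

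Second, for optimality I will construct a witness. The first-order optimality condition for $\sigma^\ast=\sigma_{s,m,k}$ over a convex set is the inequality
\begin{equation}
 \Tr\!\big[W\,\pi\big]\le 1,\qquad W:=\sum_j\frac{\lambda_j}{q^\ast_j}\frac{P_j}{\dim P_j}\cdot c,
\end{equation}
for every product state $\pi$, with suitable normalisation (equivalently $\sup_{\text{prod}}\langle \alpha\beta|\rho\,\sigma^{\ast-1}|\alpha\beta\rangle\le1$, using that $\rho$ and $\sigma^\ast$ commute). By twirling, it suffices to check product vectors $\ket{\alpha}_R\ket{\beta}_B$ and, by convexity in the norm weights, the maximum should occur on condensates $\ket{\bar u;m}\ket{u;k}$. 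Here I will compute $\lambda_j$ explicitly from the decomposition \eqref{eq:decomposition}: the loss stage acts on $\Phi_{s,m}$ producing weights binomial in $a$, and the amplifier stage produces negative-binomial factors in $v$; the ratio $\lambda_j/q^\ast_j$ should come out as a polynomial in $z$ of Eq.~\eqref{eq:witness-domain} with nonnegative coefficients, and $z>1$ (non-entanglement-breaking) is what makes the condensate the maximiser. This step, showing the witness inequality for all product vectors rather than only condensates, is the main obstacle; I would first try to write $\rho\,\sigma^{\ast-1}$ as a positive combination of operators of the form $(\overline{a^\dagger}\cdot a)^{\ell}$-type contractions whose expectation on product vectors is bounded by their value on aligned condensates via Cauchy--Schwarz / the inequality $|\langle\alpha|\cdots|\beta\rangle|\le$ aligned value, exactly as in the quantum-limited case of Ref.~\cite[Appendices B 2 and E 2]{LossAmplifier}, with positivity of coefficients supplied by $z>1$.

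Third, for $\mathcal J_{s,m}=\bigoplus_kp_k\rho_{s,m,k}$, Bob's output number is a local flag, so Eq.~\eqref{eq:flags} gives $\ER(\mathcal J_{s,m})=\sum_kp_k\ER(\rho_{s,m,k})$, yielding the second line of Eq.~\eqref{eq:res-exactJ}. Finally, additivity: the witness bound $\Tr[\rho\sigma^{-1}\pi]\le1$ on products is multiplicative, since for tensor products the operator $\bigotimes_i\rho_i\sigma_i^{-1}$ restricted to fully separable comparisons needs a bound over states separable only across the total cut; I will use the standard Rains/Vedral--Plenio criterion that if the witness $\rho\sigma^{-1}$ has a partial transpose (or a factorised structure) with operator norm one on each factor, the product witness still has value at most one on all separable states. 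Concretely, I would check that each $\rho_{s,m,k}\sigma_{s,m,k}^{-1}$, partially transposed on $R$, has operator norm bounded by one (the PPT-type form of the witness), which tensorises, giving $\ER(\bigotimes_i\rho_i)\ge\sum_iD(\rho_i\Vert\sigma_i)$; subadditivity supplies the reverse inequality. Products of resources then follow by combining with the flag decomposition.
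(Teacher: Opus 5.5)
Your single-copy argument is essentially the paper's. Both $\rho_{s,m,k}$ and $\sigma_{s,m,k}$ are diagonal on the $\min(m,k)+1$ irreducible blocks. The paper builds these blocks $\Pi_h$ from the ladder operators $C_-=\sum_i r_ib_i$, $C_+=C_-^\dagger$ rather than from Schur's lemma. Optimality is then certified by writing
\begin{equation*}
 W=\rho_{s,m,k}\sigma_{s,m,k}^{-1}=c_{m,k}\sum_{j}\frac{(z-1)^j}{j!\,\fall Aj}\,C_+^jC_-^j,\qquad c_{m,k}>0,
\end{equation*}
and bounding $\Tr[(\alpha\otimes\beta)C_+^jC_-^j]\le\fall mj\fall kj$, with equality for aligned conjugate condensates.

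The genuine difference is the product step. The paper extends the contraction bound to joint particle reductions of arbitrary local states that may be entangled across blocks. You propose instead the Rains-type criterion $\norm{W^{T_R}}_\infty\le1$, which tensorises because operator norms multiply. You left this criterion to be checked, but it does hold, and it follows from the same expansion:
\begin{itemize}
\item In first quantization, $C_+^jC_-^j=\fall mj\fall kj\,P(\Theta^{\otimes j}\otimes I)P$. Here $P$ projects onto $\Sym^m\mathbb C^s\otimes\Sym^k\mathbb C^s$, and $\Theta=\sum_{i,i'}\ket{ii}\bra{i'i'}$ pairs the $\ell$th particle of $R$ with the $\ell$th particle of $B$.
\item Because $P$ and the Fock-basis embedding are real, the partial transpose replaces each $\Theta$ by a swap. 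Hence $(C_+^jC_-^j)^{T_R}$ is $\fall mj\fall kj$ times a compressed permutation unitary, with norm at most $\fall mj\fall kj$.
\item The coefficients are nonnegative for $z>1$, so $\norm{W^{T_R}}_\infty\le c_{m,k}\sum_j\fall mj\fall kj(z-1)^j/(j!\,\fall Aj)=\Tr(W\sigma_{s,m,k})=1$.
\end{itemize}

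Your route gives more than the paper's. It bounds $\Tr(W\omega)$ on all PPT states, so $\sigma_{s,m,k}$ is also the optimal PPT comparison. It resolves your single-copy ``main obstacle'' at the same time, and products need no analysis of joint reductions. The paper's route avoids basis-dependent partial transposes and stays inside the separable set. Its squeezed-vacuum and heterodyne ($z=1$) embeddings also yield the two spectra without composing loss and amplifier Kraus maps.

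Three smaller points need correcting.
\begin{itemize}
\item The displayed $W$ should not contain $1/\dim P_j$. With block weights $\lambda_j,q_j^\ast$, one has $\rho\sigma^{-1}=\sum_j(\lambda_j/q_j^\ast)P_j$.
\item The positivity that matters is that of the coefficients $(z-1)^j$ in the $C_+^jC_-^j$ expansion. The ratio $\lambda_j/q_j^\ast$ is proportional to the paper's $H_h(z)/\binom Ah$, which has nonnegative coefficients in $z$ for every value of $z$, so that property cannot detect the boundary $z=1$.
\item The $U(s)$ twirl at most pinches Bob's total number; it does not remove comparison weight outside the $k$-photon sector. You therefore still need the paper's local pinching step, with its penalty $-\log_2 w$, to reduce to the finite $(m,k)$ space.
\end{itemize}
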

\begin{proof}
The notation in this proof distinguishes eigenvalues from the
probabilities of spectral subspaces:
\begin{center}
\begin{tabular}{ll}
\toprule
Quantity & Meaning at fixed $s,m$\\
\midrule
$k$ & Total output photon number, with probability $p_k$\\
$h$ & Label of a correlation subspace inside the $(m,k)$ sector\\
$D_h$ & Dimension of that subspace (also dependent on $s,m,k$)\\
$\lambda_{k,h}$ & An eigenvalue of the unnormalized block $p_k\rho_{s,m,k}$\\
$q_{k,h}=D_h\lambda_{k,h}$ & Joint probability of the labels $(k,h)$\\
$\mu_h$ & An eigenvalue of the normalized separable comparison\\
\bottomrule
\end{tabular}
\end{center}
In particular, $\sum_hD_h\lambda_{k,h}=p_k$, whereas
$\sum_hD_h\mu_h=1$ for each fixed $k$. Keeping these two
normalizations separate is important when taking logarithms.

The proof has three parts: diagonalize the resource and comparison in
the same number-sector decomposition, certify optimality by a
separable-state witness, and then extend that witness to products
and countable output flags.

\smallskip\noindent\emph{1. Decomposing the fixed-number space.}
Let $r_i,b_i$ annihilate photons in the reference and output modes,
respectively, and put
\[
 C_-=\sum_{i=1}^s r_i b_i,~~ C_+=C_-^\dagger,~~
 A=s+m+k-1.
\]
The operator $C_-$ removes one photon from each laboratory and
contracts their mode indices. Its commutator is
\[
 [C_-,C_+]=s+\hat N_R+\hat N_B.
\]
On a space with fixed numbers $(m_0,k_0)$, this implies
$C_-C_+=C_+C_-+(s+m_0+k_0)I>0$.
Consequently $C_+$ is injective and the adjoint lowering map is onto
the preceding number space. The kernel of $C_-$ at $(m_0,k_0)$
therefore has dimension
$d_{s,m_0}d_{s,k_0}-d_{s,m_0-1}d_{s,k_0-1}$.

For $0\le h\le\min(m,k)$, apply $C_+^h$ to that kernel at
$(m-h,k-h)$ and call the resulting subspace $\Pi_h$.
On a lowest-weight vector $\ket{\zeta}$ with
$C_-\ket{\zeta}=0$, the commutator gives
\begin{equation}
 C_-C_+^h\ket{\zeta}
   =h(s+m_0+k_0+h-1)C_+^{h-1}\ket{\zeta}.
 \label{eq:lowest-weight-action}
\end{equation}
With $m_0=m-h$ and $k_0=k-h$, iteration gives
\begin{align}
 D_h:=\dim\Pi_h
 &=d_{s,m-h}d_{s,k-h}
     -d_{s,m-h-1}d_{s,k-h-1},\label{eq:res-dim}\\
 C_+^jC_-^j\big|_{\Pi_h}
 &=\fall hj\fall{(A-h)}j I.
 \label{eq:res-ladder}
\end{align}
Here $\fall nj=n(n-1)\cdots(n-j+1)$ and $\fall n0=1$.
In particular, the eigenvalue of $C_+C_-$ is $h(A-h)$.
For $0\le h<\min(m,k)$, successive eigenvalues differ by
$A-2h-1>0$. Thus these subspaces are orthogonal.
Their dimensions telescope to $d_{s,m}d_{s,k}$, proving completeness.
The label $h$ is a spectral label of the mixed state; it does not
represent an internal loss or amplification record supplied to a party.

The lowering operator organizes correlations between equal mode
indices in the two laboratories. Starting from a vector annihilated
by $C_-$, repeated application of $C_+$ builds the corresponding
ladder. Its endpoint at the prescribed numbers $(m,k)$ belongs to
exactly one of the subspaces above. The distinct eigenvalues and
the telescoping dimension sum establish the decomposition without
requiring a choice of basis inside each degenerate subspace.

\smallskip\noindent\emph{2. Spectrum of the noisy resource.}
The unnormalized block $p_k\rho_{s,m,k}$ has eigenvalues
\begin{align}
 \lambda_{k,h}
 &=\frac{(1-a)^m v^k}{d_{s,m}G^{s+k}}H_h(z),
 \label{eq:res-eigen}\\
 H_h(z)&=\sum_{j=0}^h\binom hj\binom{A-h}{j}z^j.
 \label{eq:res-H}
\end{align}
We derive this formula by embedding the fixed input sector in a
Gaussian state whose photon-number expansion is explicit.
Choose $0<b<1$ and send one arm of
$\sqrt{1-b}\sum_{n\ge0}b^{n/2}\ket{n,n}$ through the channel.
For $s$ independent copies, the output can be written as
\begin{equation}
 \left(\frac{1-b}{G}\right)^s
 e^{\chi C_+}q_R^{\hat N_R}q_B^{\hat N_B}e^{\chi C_-},
 ~~
 q_R=b(1-a),\quad q_B=\frac vG,\quad
 \chi=\sqrt{\frac{ab}{G}}.
 \label{eq:res-normal}
\end{equation}

To identify this operator, its one-pair Husimi density is
\begin{equation}
 \frac{1-b}{\pi^2G}
 \exp\!\left[-(1-q_R)|\alpha|^2-(1-q_B)|\beta|^2
                  +2\chi\operatorname{Re}(\alpha\beta)\right].
 \label{eq:resource-husimi}
\end{equation}
The precision determinant is
\[
 (1-q_R)(1-q_B)-\chi^2=\frac{1-b}{G}>0.
\]
It both normalizes the Gaussian and gives, after inversion, the
Husimi moments
\[
 \EE|\alpha|^2=\frac1{1-b}=N+1,~~
 \EE|\beta|^2=\tau N+v+1,~~
 \EE(\alpha\beta)=\frac{\sqrt{\tau b}}{1-b}
                =\sqrt{\tau N(N+1)},
 \quad N=\frac b{1-b}.
\]
The extra ones in the diagonal moments are the heterodyne vacuum
noise. These are precisely the moments of the required Gaussian
output, which establishes Eq.~\eqref{eq:res-normal}.

Projecting Alice's $s$ reference modes onto total number $m$ has
probability $d_{s,m}(1-b)^s b^m$. If we also project Bob onto number
$k$, a term with $j$ lowerings on the right must have exactly $j$
raisings on the left. Its scalar coefficient, before normalizing
Alice's event, is
\[
 \left(\frac{1-b}{G}\right)^s
 \frac{\chi^{2j}}{(j!)^2}\,q_R^{m-j}q_B^{k-j}.
\]
Since $\chi^2/(q_Rq_B)=z$, dividing by Alice's probability gives
\begin{equation}
 p_k\rho_{s,m,k}
 =\frac{(1-a)^m v^k}{d_{s,m}G^{s+k}}
   \sum_{j=0}^{\min(m,k)}
       \frac{z^j}{(j!)^2}C_+^jC_-^j.
 \label{eq:res-resource-polynomial}
\end{equation}
The auxiliary parameter $b$ has cancelled, as it must for a resource
defined by a fixed input sector. Substituting
Eq.~\eqref{eq:res-ladder} gives Eqs.~\eqref{eq:res-eigen}--\eqref{eq:res-H}.

\smallskip\noindent\emph{3. Spectrum of the separable comparison.}
On the same subspaces $\Pi_h$, the eigenvalues of
$\sigma_{s,m,k}$ are
\begin{equation}
 \mu_h=\frac{m!k!(s-1)!}{h!(A-h)!}>0.
 \label{eq:res-mu}
\end{equation}
For a direct derivation, consider the heterodyne measure-and-prepare
map of gain $c>0$,
\[
 T\longmapsto
 \int_{\mathbb C^s}\frac{\dd^{2s}\alpha}{\pi^s}
       \langle\alpha|T|\alpha\rangle
       \proj{\sqrt c\,\alpha}.
\]
Its Gaussian parameters are $\tau=v=c$, so in
Eq.~\eqref{eq:decomposition} it has
$G=1+c$, $a=c/(1+c)$, and $z=1$ \cite{MLG}.
Write $\alpha=r u$ with $u$ a unit vector. Projection onto
$(m,k)$ leaves the product directions
$\ket{\bar u;m}\otimes\ket{u;k}$; all dependence on the radius
is scalar. Normalizing the angular integral therefore gives exactly
Eq.~\eqref{eq:sigma}.
The radial Gaussian integral gives the output-count probability
\begin{equation}
 p_k^{\rm het}=\binom{s+m+k-1}{k}
              \frac{c^k}{(1+c)^{s+m+k}}.
 \label{eq:heterodyne-count}
\end{equation}
Vandermonde's identity gives $H_h(1)=\binom Ah$.
Dividing Eq.~\eqref{eq:res-eigen}, evaluated at $z=1$, by
Eq.~\eqref{eq:heterodyne-count} proves Eq.~\eqref{eq:res-mu}.
Thus the resource and comparison commute. Moreover, every $\mu_h$
is positive, so the comparison has full support on the finite
$(m,k)$ space.

\smallskip\noindent\emph{4. Optimality against all separable comparisons.}
It suffices to optimize over states on the local $(m,k)$ subspace.
Indeed, pinch any separable comparison into that subspace and its
complement. This leaves the target $\rho_{s,m,k}$ unchanged and cannot
increase its relative entropy to the comparison. If the comparison
has weight $w>0$ in that subspace, its normalized restriction is
separable and its relative entropy differs by the nonnegative
penalty $-\log_2w$. A zero-weight block gives infinite relative
entropy. We may therefore work entirely in finite dimensions.

Expand $H_h$ around $z=1$. The coefficient of $(z-1)^j$ is
$\binom hj\binom{A-j}{h}$, by Vandermonde's identity, so
\begin{equation}
 \frac{H_h(z)}{\binom Ah}
 =\sum_{j=0}^h
       \frac{\fall hj\fall{(A-h)}j}{j!\fall Aj}(z-1)^j.
 \label{eq:witness-polynomial-expansion}
\end{equation}
Using Eqs.~\eqref{eq:res-ladder} and~\eqref{eq:res-mu}, the ratio
$W=\rho_{s,m,k}\sigma_{s,m,k}^{-1}$ is consequently
\begin{equation}
 W=c_{m,k}\sum_{j=0}^{\min(m,k)}
       \frac{(z-1)^j}{j!\fall Aj}C_+^jC_-^j,
 ~~ c_{m,k}>0.
 \label{eq:res-witness}
\end{equation}
The positive scalar accounts for normalization. Crucially, every
coefficient is nonnegative because $z>1$.

Identify an $m$-photon state with a density operator on
$\Sym^m\mathbb C^s$. For a state $\alpha_R$, let
$\alpha^{(j)}$ denote its normalized $j$-particle reduction, obtained
by tracing out $m-j$ tensor factors; define $\beta^{(j)}$ similarly
for a $k$-photon state. Expanding the ordered annihilation strings in
$C_-^j$ gives
\begin{equation}
 \Tr[(\alpha\otimes\beta)C_+^jC_-^j]
 =\fall mj\fall kj\,
       \Tr[(\alpha^{(j)})^T\beta^{(j)}]
 \le\fall mj\fall kj.
 \label{eq:res-contractbound}
\end{equation}
The falling factorials count the choices of ordered particles.
The transpose is in the induced tensor basis and arises from
contracting equal mode indices between the two laboratories.
The reduced operators are density operators, and their overlap is
at most one. For $j=0$, both reductions are the scalar one.

Aligned conjugate condensates attain equality for every $j$
simultaneously. Their average is $\sigma_{s,m,k}$, and therefore
\begin{equation}
 \sup_{\omega\in\SEP}\Tr(W\omega)
 =\Tr(W\sigma_{s,m,k})=\Tr\rho_{s,m,k}=1.
 \label{eq:witness-normalization}
\end{equation}
Linearity extends the bound from product states to their convex hull
and closure.

Finally, since $\rho$ and $\sigma$ commute and $\sigma$ has full
support, differentiation in any separable direction gives
\begin{equation}
 \left.\frac{\dd}{\dd\epsilon}
 D\bigl(\rho\Vert(1-\epsilon)\sigma+\epsilon\omega\bigr)
 \right|_{\epsilon=0^+}
 =\frac{1-\Tr(W\omega)}{\ln2}\ge0.
 \label{eq:relative-entropy-direction}
\end{equation}
The direction $\omega$ need not commute with $\sigma$:
commutation of $\rho$ with $\sigma$ suffices to reduce the trace
derivative of the matrix logarithm to $\Tr(\rho\sigma^{-1}\omega)$.
Convexity in the second argument then makes this first-order
condition sufficient for global optimality \cite{VollbrechtWerner}.
This proves the first equality in Eq.~\eqref{eq:res-exactJ}.

The witness condition is a supporting-plane certificate for the
relative-entropy minimization. It checks every separable direction
away from $\sigma$, including directions that do not respect the
resource symmetry. Symmetry makes the candidate and its spectrum
tractable; the contraction estimate is what proves optimality over
the full separable set. This distinction is central to the use of
symmetry in entanglement calculations~\cite{VollbrechtWerner}.

\smallskip\noindent\emph{5. Why the same comparison is optimal for products.}
The product claim requires more than testing products independently.
A separable comparison across the aggregate Alice--Bob cut may have
entanglement between blocks within Alice's laboratory or within Bob's.
Let $\alpha_{R_1\cdots R_r}$ and $\beta_{B_1\cdots B_r}$ be arbitrary
such local states. For $\boldsymbol j=(j_1,\ldots,j_r)$, take their
normalized joint particle reductions, retaining $j_\ell$ particles
from block $\ell$. The same contraction identity becomes
\begin{align}
 &\Tr\!\left[(\alpha\otimes\beta)
       \bigotimes_{\ell=1}^r
          C_{\ell,+}^{j_\ell}C_{\ell,-}^{j_\ell}\right]\notag\\
 &\quad=
 \left(\prod_{\ell=1}^r
          \fall{m_\ell}{j_\ell}\fall{k_\ell}{j_\ell}\right)
       \Tr[(\alpha^{(\boldsymbol j)})^T
                            \beta^{(\boldsymbol j)}]
 \le
 \prod_{\ell=1}^r
          \fall{m_\ell}{j_\ell}\fall{k_\ell}{j_\ell}.
 \label{eq:res-joint-reduction}
\end{align}
These joint reductions are density operators even when they do not
factor between blocks. All coefficients in the product of the
expansions~\eqref{eq:res-witness} are nonnegative, and a product of
aligned condensates saturates every term. Hence
$\Tr[(\bigotimes_\ell W_\ell)\omega]\le1$ for every aggregate
separable comparison. Equation~\eqref{eq:relative-entropy-direction}
applies once more, proving optimality of $\bigotimes_\ell\sigma_\ell$.
Relative entropy is additive on that product comparison, which
establishes the asserted conditional REE additivity. No general
additivity property of REE is assumed.

\smallskip\noindent\emph{6. Restoring the output-number flags.}
For one resource, the separable state
$\Sigma=\bigoplus_{k\ge0}p_k\sigma_{s,m,k}$ gives
\[
 \ER(\mathcal J_{s,m})
 \le D(\mathcal J_{s,m}\Vert\Sigma)
 =\sum_{k\ge0}p_kD(\rho_{s,m,k}\Vert\sigma_{s,m,k}).
\]
The probabilities agree in the two direct sums, so their classical
relative-entropy contribution is zero. For the reverse inequality,
Bob resolves finitely many outcomes $k\le K_0$ and one combined
tail outcome. Strong monotonicity bounds the resource REE below by
the partial sum up to $K_0$; the tail contribution is nonnegative.
Letting $K_0\to\infty$ gives equality. Each conditional cost is at
most $\log_2d_{s,m}$, because Alice's dimension is $d_{s,m}$.
Thus this average is finite regardless of the output tail.

For finitely many resources, truncate the number flag in each
resource and use the conditional product result just proved.
The same upper-comparison and lower-partial-sum arguments then
establish additivity for the full resources. For $s=1$, the reference
sector is one-dimensional, so all corresponding resources are
separable and $f_1(m)=0$.
\end{proof}

\paragraph{One-photon example.}
As a check familiar from finite-dimensional entanglement theory,
take $m=k=1$ and $s\ge2$. The one-photon space in each laboratory
is an $s$-dimensional system. There are only two subspaces:
$D_0=s^2-1$ and $D_1=1$, with
$H_0(z)=1$ and $H_1(z)=1+sz$.
The one-dimensional subspace is the maximally entangled state
$\Phi_s$. Normalization gives
\begin{equation}
 \rho_{s,1,1}=F\Phi_s+\frac{1-F}{s^2-1}(I-\Phi_s),
 ~~ F=\frac{1+sz}{s(s+z)}.
 \label{eq:one-photon-isotropic}
\end{equation}
Thus the conditional resource is an isotropic state. The comparison
eigenvalues are $\mu_1=1/s$ and $\mu_0=1/[s(s+1)]$, so its
maximally entangled fraction is $1/s$. In the region $z>1$,
$F>1/s$, and the general formula reduces to
\begin{equation}
 \ER(\rho_{s,1,1})
 =\log_2s-h_2(F)-(1-F)\log_2(s-1).
 \label{eq:one-photon-ree}
\end{equation}
For $s=2$ this is $1-h_2(F)$. The example checks the degeneracy
factors, the conditional normalization, and the separable boundary
$z=1$ in the full calculation. It is an entanglement calculation;
using this conditional state for communication would additionally
require its occurrence probability and an achievable distillation
yield.

\subsection{Asymptotic entanglement per mode}
The exact formula in Lemma~\ref{lem:optimal} is a sum over two labels:
the output number $k$ and the ladder label $h$. In a large block,
both labels concentrate after division by the number of modes.
Their Shannon entropy is sublinear, while the logarithms of the
block dimensions and comparison eigenvalues have nonzero limits
per mode. These limits produce the closed expression in
Eq.~\eqref{eq:F}.

There is a useful statistical-mechanical interpretation of the next
argument. The sector resource has exactly $m_s$ reference photons,
like a microcanonical constraint. A product of squeezed-state outputs
has a fluctuating reference number with a thermal distribution.
Conditioning that product at its mean selects the desired sector
with only a polynomial probability cost. This lets product-state
moment estimates locate the typical spectral labels of the
fixed-number resource. The conditioning does not make the resource
itself a product state.

\begin{lemma}[Sector entanglement density]\label{lem:density}
For every sequence of nonnegative integers $m_s$ with $m_s/s\to N\ge0$,
\begin{equation}
 \frac{\ER(\mathcal J_{s,m_s})}{s}
       \longrightarrow\cF_{\tau,v}(N).
 \label{eq:res-density}
\end{equation}
\end{lemma}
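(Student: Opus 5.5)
We will prove Lemma~\ref{lem:density} by evaluating the exact formula of Lemma~\ref{lem:optimal} asymptotically. Write the cost as
\[
 D(\mathcal J_{s,m}\Vert\Sigma)=-S(\mathcal J_{s,m})-\sum_{k,h}q_{k,h}\log_2(p_k\mu_h).
\]
Since $\mathcal J_{s,m}$ and $\Sigma$ commute and share the eigenspaces $\Pi_h$ inside each $(m,k)$ block, this becomes a classical relative entropy over labels $(k,h)$ with multiplicities $D_h$:
\[
 \ER(\mathcal J_{s,m})=\sum_{k,h}q_{k,h}\log_2\frac{\lambda_{k,h}}{p_k\mu_h}.
\]
Using Eqs.~\eqref{eq:res-eigen} and~\eqref{eq:res-mu}, the summand is an explicit function $\Lambda_s(m,k,h)$ built from binomials and $H_h(z)$. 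The plan is to show that, after division by $s$, $\Lambda_s/s$ converges uniformly on compact sets of rescaled labels $(\kappa,\eta)=(k/s,h/s)$ to a smooth function $\Lambda(N,\kappa,\eta)$. We then show that the joint law of $(k/s,h/s)$ under $q$ concentrates at a point $(\kappa^*,\eta^*)$, and that $\Lambda(N,\kappa^*,\eta^*)=\cF_{\tau,v}(N)$.

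\emph{Step 1: Stirling asymptotics.} With $A\approx s(1+N+\kappa)$, Stirling's formula gives $\frac1s\log_2\mu_h$ and $\frac1s\log_2 d_{s,m}\to g(N)$. For $H_h(z)$, the sum over $j$ is dominated by its largest term, so $\frac1s\log_2 H_h(z)\to\max_\theta[\ldots]$. This maximum is an explicit concave optimization, solvable by a quadratic equation in $\theta=j/s$ because $z>1$. The resulting expression, which appears as the eigenvalue density $\lambda_{k,h}$ together with the ratio $D_h\lambda/(p_k\mu_h)$, has errors $O(\log s/s)$ uniformly.

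\emph{Step 2: locating typical labels.} We use the statistical-mechanics device described before the lemma. Embed the sector in the product Gaussian output of Eq.~\eqref{eq:res-normal}, with $b=N/(N+1)$. Conditioning on Alice's count $m_s$ costs only probability $\Omega(s^{-1/2})$, by the local limit theorem for the negative binomial. The product state has i.i.d.\ structure, so $k/s$ concentrates at $\tau N+v$ with exponential tails. This survives conditioning up to polynomial factors. The same holds for the ladder label, via the moment $\langle C_+C_-\rangle/s^2=h(A-h)/s^2$, which for the product state equals $|\EE\alpha\beta|^2$-type correlations $\tau N(N+1)$ plus $O(1/s)$. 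Its concentration follows from a second-moment (variance) computation, again transferred through conditioning.

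The Gaussian moments identify the typical values $\kappa^*=y$ and $\eta^*(1+N+y-\eta^*)=\tau N(N+1)$. This is exactly the quadratic defining $t$ in Eq.~\eqref{eq:rateparams}, so $\eta^*=t$. The number of labels $(k,h)$ that contribute is polynomial in $s$, and each summand is bounded by $\log_2 d_{s,m}\le s\,g(m/s)$ from Eq.~\eqref{eq:sector-dimension-entropy}. Hence large deviations and the far tail in $k$ contribute $o(s)$ once we also control $\EE k$. Dominated convergence then gives the limit $\Lambda(N,y,t)$.

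\emph{Step 3: identification.} We expand $\Lambda(N,y,t)$ using $\varphi$. The $\mu_h$ term produces $\varphi(t)+\varphi(L-t)-\varphi(N)-\varphi(y)$ plus linear terms that cancel. The eigenvalue entropy must produce $-g(u)-g(w)$, which we cross-check against $S(RB)=g(u)+g(w)$ of $\omega_N$. We verify the case $N=0$ and the boundary $v\to\tau$ as sanity checks.

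The main obstacle is Step 2 for the label $h$. Because $h$ is a spectral label of a non-product operator, concentration is not automatic, and the conditioning on $m_s$ must be shown not to shift the typical $\eta$. If the variance route proves unwieldy, the fallback is a direct Laplace-method analysis of $q_{k,h}=D_h\lambda_{k,h}$. The log-density $\frac1s\log_2 q$ is a strictly concave function of $(\kappa,\eta)$ with a unique maximum, and the stationarity equations can be checked to reproduce $y$ and $t$. This gives concentration with Gaussian-width corrections directly from Step 1's uniform Stirling bounds.
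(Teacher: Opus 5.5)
Your overall architecture matches the paper's: the exact commuting spectrum of Lemma~\ref{lem:optimal}, a Gaussian embedding of the sector, moment concentration of $K/s$ and of the ladder label through $C_+C_-$, Stirling asymptotics, and first-moment tail control. However, the conditioning step fails as stated. With the fixed parameter $b=N/(N+1)$, Stirling's formula gives $\Pr(M=m_s)=e^{-\Theta((m_s-sN)^2/s)}$ up to polynomial factors. This is not $\Omega(s^{-1/2})$ once $|m_s-sN|\gg\sqrt{s\log s}$, e.g.\ for $m_s=\lfloor sN+s^{3/4}\rfloor$, even though $m_s/s\to N$.

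Chernoff tails for $K$ still beat this $e^{-o(s)}$ cost. Your ladder-label concentration, however, rests only on a variance bound, $\Pr(D_s)=O(s^{-1})$, and $O(s^{-1})/e^{-\Theta(\sqrt s)}$ does not vanish. This matters because the lemma must hold for every sequence: it is used through the local-uniform form~\eqref{eq:local-uniform-density} in Lemma~\ref{lem:gain}. The paper lets the auxiliary mean depend on $s$, $N_s=m_s/s$. Then $M=m_s$ is exactly the mean and costs only $[2\pi sN_s(1+N_s)]^{-1/2}$, and the fourth-order moment bounds are uniform because $N_s$ stays near $N>0$. Your Laplace fallback on the explicit $q_{k,h}$ would sidestep this, but its strict-concavity claim is unproved.

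Two smaller points belong to the same step. The case $N=0$, where your embedding degenerates, is a genuine case of the lemma, not a sanity check; it follows from $\ER(\mathcal J_{s,m_s})\le\log_2d_{s,m_s}\le s\,g(m_s/s)$. You also need to justify taking the \emph{smaller} root: $0\le H/s\le\min(N_s,K/s)$, and $h(L-h)$ is strictly increasing on $[0,\min(N,y)]$.

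Second, the identification in Step~3 is asserted rather than derived. Your summand $\log_2[\lambda_{k,h}/(p_k\mu_h)]$ contains $\log_2p_k$, which Eqs.~\eqref{eq:res-eigen}--\eqref{eq:res-mu} do not supply. You also never solve the Laplace problem for $H_h(z)$, and matching against $S(RB)$ of the Gaussian $\omega_N$ is only heuristic, because $\mathcal J_{s,m}$ is not Gaussian. The missing observation makes all of this unnecessary: $\lambda_{k,h}=q_{k,h}/D_h$. Hence $\sum_{k,h}q_{k,h}\log_2\lambda_{k,h}=-H(K,H)-\EE\log_2D_H$ and $\sum_{k,h}q_{k,h}\log_2p_k=-H(K)$, so
\[
 \ER(\mathcal J_{s,m_s})=-\EE\log_2D_H-\EE\log_2\mu_H-H(H\mid K),
 \qquad 0\le H(H\mid K)\le\log_2(m_s+1).
\]
Only Stirling for $D_h$ and $\mu_h$ is then needed. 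For $D_h$ this goes through the factorization~\eqref{eq:res-dimension-factor}, whose bracket stays bounded away from zero. The term $g(u)+g(w)$ appears directly as the dimension density, and $H_h(z)$ and $p_k$ need no asymptotics at all.

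For the averaging, the variables $\log_2D_H/s$ and $-\log_2\mu_H/s$ are nonnegative and grow only logarithmically in $K/s$, as in~\eqref{eq:dimension-tail-bound}--\eqref{eq:comparison-tail-bound}. Uniform integrability therefore gives convergence in expectation in both directions. Your bound of each summand by $\log_2d_{s,m}$ is only an upper bound, and you did not justify it; it does not by itself control the lower limit.
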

\begin{proof}
\smallskip\noindent\emph{1. Embed the sector at the mean of a product state.}
First suppose $N>0$ and set $N_s=m_s/s$.
Let $\omega_{N_s}$ be the output when one arm of a two-mode squeezed
vacuum of mean $N_s$ is sent through the noisy channel. In
$\omega_{N_s}^{\otimes s}$, Alice's reference marginal is a product
of thermal states. Before the channel, conditioning its total number
$M$ on $m_s$ gives $\Phi_{s,m_s}$, because all occupations with that
total have equal squeezed-state amplitude. Conditioning after the
channel therefore gives $\mathcal J_{s,m_s}$.
The probability is
\begin{equation}
 \begin{aligned}
 \Pr(M=m_s)
 &=d_{s,m_s}\frac{N_s^{m_s}}{(1+N_s)^{s+m_s}}\\
 &\sim[2\pi sN_s(1+N_s)]^{-1/2}.
 \end{aligned}
 \label{eq:res-conditioning-probability}
\end{equation}
The last line is Stirling's formula at the exact mean
$sN_s=m_s$. Allowing the auxiliary mean to depend on $s$ is
deliberate: it prevents the conditioning event from becoming an
exponentially unlikely deviation when $m_s-sN$ grows.

\smallskip\noindent\emph{2. Concentration before conditioning.}
Bob's total number $K$ in the unconditioned product state is a sum
of $s$ independent thermal counts of mean $y_s=\tau N_s+v$. Hence
\[
 \EE(K/s-y_s)^2=\frac{y_s(1+y_s)}s=O(s^{-1}).
\]
For the other spectral observable, put
$c_s=\sqrt{\tau N_s(N_s+1)}$,
$Z_i=r_i b_i-c_s$, and $Y_s=\sum_iZ_i$. Then
\begin{equation}
 C_+C_- -s^2c_s^2
       =s c_s(Y_s+Y_s^\dagger)+Y_s^\dagger Y_s.
 \label{eq:res-centered-ladder}
\end{equation}
Each $Z_i$ has zero mean, and different pairs are independent.
For example, the one-pair Gaussian moments are
\begin{equation}
 \langle Z_i^\dagger Z_i\rangle=N_s y_s,~~
 \langle Z_i Z_i^\dagger\rangle=(N_s+1)(y_s+1),~~
 \langle Z_i^2\rangle=c_s^2.
 \label{eq:centered-single-pair-moments}
\end{equation}
They immediately give
$\langle(Y_s+Y_s^\dagger)^2\rangle=O(s)$.
For the fourth-order expression, expand
$\sum_{i,j,k,\ell}
\langle Z_i^\dagger Z_jZ_k^\dagger Z_\ell\rangle$.
Any term in which an index occurs only once vanishes by centered
independence. The remaining terms contain one or two distinct
indices, so there are $O(s^2)$ of them. Their moments are uniformly
bounded because $N_s$ stays near the fixed positive $N$.
Therefore
$\langle(Y_s^\dagger Y_s)^2\rangle=O(s^2)$.

For self-adjoint operators, $(A+B)^2\le2A^2+2B^2$ as quadratic
forms. Applying it to Eq.~\eqref{eq:res-centered-ladder} and dividing
by $s^4$ gives
\begin{equation}
 \left\langle\left(C_+C_-/s^2-c_s^2\right)^2\right\rangle
       =O(s^{-1}).
 \label{eq:ladder-second-moment}
\end{equation}
All moments used here are finite in the product Gaussian state.

\smallskip\noindent\emph{3. Concentration survives the sector conditioning.}
The operators $M$, $K$, and $C_+C_-$ commute: the latter preserves
the photon number in each laboratory. Their spectral projections
therefore define a joint probability law.
For a fixed positive tolerance, let $D_s$ be a deviation event for
either $K/s$ or $C_+C_-/s^2$ from the values just identified.
Chebyshev's inequality gives $\Pr(D_s)=O(s^{-1})$. Thus
\begin{equation}
 \Pr(D_s\mid M=m_s)
 \le\frac{\Pr(D_s)}{\Pr(M=m_s)}
 =O(s^{-1/2})\longrightarrow0.
 \label{eq:res-conditional-concentration}
\end{equation}
This estimate is the reason for retaining the conditioning
probability in Step 1. Unconditioned concentration alone would
not justify conditioning on an event of vanishing probability.

Under the conditioned resource, let $H$ denote the random ladder
label, so that
$C_+C_-=H(s+m_s+K-1-H)$.
Since $0\le H/s\le\min(N_s,K/s)$, the above concentration implies
\begin{equation}
 K/s\longrightarrow y,~~ H/s\longrightarrow t
 \quad\text{in probability}.
 \label{eq:res-concentration}
\end{equation}
Indeed, every possible limiting value of $H/s$ must solve
$t(1+N+y-t)=\tau N(N+1)$ in $[0,\min(N,y)]$.
The derivative of $h(1+N+y-h)$ on that interval is at least
$1+|N-y|>0$, so the solution is unique and depends continuously
on the other limiting quantities. It is the smaller root in
Eq.~\eqref{eq:rateparams}.

\smallskip\noindent\emph{4. Dimension and comparison eigenvalue per mode.}
Write $p=m_s-h$ and $r=k-h$. The difference of dimensions in
Eq.~\eqref{eq:res-dim} factors as
\begin{equation}
 D_h=d_{s,p}d_{s,r}
 \left[1-\frac{pr}{(s+p-1)(s+r-1)}\right].
 \label{eq:res-dimension-factor}
\end{equation}
At the typical values, $p/s\to u=N-t$ and $r/s\to w=y-t$.
The bracket tends to
$1-uw/[(1+u)(1+w)]>0$.
Its logarithm divided by $s$ therefore vanishes: the subtraction
in $D_h$ does not cancel the leading exponential growth.
Stirling's formula gives
\begin{equation}
 \frac{\log_2D_H}{s}\longrightarrow g(u)+g(w)
 \quad\text{in probability}.
 \label{eq:dimension-density-limit}
\end{equation}

For the comparison eigenvalue,
\[
 -\log_2\mu_h
 =\log_2(h!)+\log_2((A-h)!)
       -\log_2(m_s!)-\log_2(k!)-\log_2((s-1)!).
\]
The coefficients of $s\log_2s$ and of the linear Stirling terms
cancel, since $h+(A-h)=m_s+k+s-1$. The remaining terms give
\begin{equation}
 -\frac{\log_2\mu_H}{s}
 \longrightarrow
 \varphi(t)+\varphi(L-t)-\varphi(N)-\varphi(y),
 \quad L=1+N+y,
 \label{eq:comparison-density-limit}
\end{equation}
again in probability. The contribution corresponding to $s-1$
is $\varphi(1)=0$.

\smallskip\noindent\emph{5. Control the tails before averaging.}
Convergence in probability is not sufficient to take entropy
expectations. We need a bound on the atypical large-$K$ tail.
For successive allowed $h$,
$\mu_{h+1}/\mu_h=(A-h)/(h+1)\ge1$.
Moreover,
$\mu_0^{-1}=d_{s,m_s}d_{s+m_s,k}$
and $D_h\le d_{s,m_s}d_{s,k}$.
Using Eq.~\eqref{eq:sector-dimension-entropy}, we obtain
\begin{align}
 0\le\frac{\log_2D_H}s
 &\le g(N_s)+g(K/s),\label{eq:dimension-tail-bound}\\
 0\le-\frac{\log_2\mu_H}s
 &\le g(N_s)+(1+N_s)
                g\!\left(\frac{K/s}{1+N_s}\right).
 \label{eq:comparison-tail-bound}
\end{align}
Under the resource,
$\EE[K/s]=\tau N_s+v$, so these means are uniformly bounded.
For any nonnegative variables $X_s$ with bounded means,
\begin{equation}
 \EE[g(X_s)\mathbf1_{\{X_s>R\}}]
 \le\left(\sup_{x>R}\frac{g(x)}x\right)
          \sup_s\EE X_s\longrightarrow0.
 \label{eq:res-uniform-integrability}
\end{equation}
The last limit follows from $g(x)=O(\log x)$.
On $X_s\le R$, the function $g$ is bounded. This establishes
uniform integrability of the first bound. The scaled entropy in
Eq.~\eqref{eq:comparison-tail-bound} satisfies the same estimate
uniformly, since $N_s$ is bounded and $1+N_s\ge1$.
Thus Eqs.~\eqref{eq:dimension-density-limit}
and~\eqref{eq:comparison-density-limit} also converge in expectation.
Only a first-moment tail bound is used at this stage.

This tail step is needed because convergence in probability of
$K/s$ and $H/s$ alone does not imply convergence of entropies.
Rare sectors with large photon number might otherwise contribute
to an average logarithm. Equations~\eqref{eq:dimension-tail-bound}
and~\eqref{eq:comparison-tail-bound} control those contributions
by a function growing only logarithmically with the photon density.
The polynomial conditioning cost, the concentration estimates,
and this bound address three separate issues in the density limit.

\smallskip\noindent\emph{6. Assemble the entropy difference.}
The joint spectral probability is
$q_{k,h}=D_h\lambda_{k,h}$: there are $D_h$ equal eigenvalues
$\lambda_{k,h}$ in the $(k,h)$ block.
Writing $\lambda_{k,h}=q_{k,h}/D_h$ gives
\begin{equation}
 S(\mathcal J_{s,m_s})
       =H(K,H)+\EE\log_2D_H.
 \label{eq:resource-entropy-decomposition}
\end{equation}
The optimal flagged comparison has eigenvalues $p_k\mu_h$.
Taking its cross entropy with the resource gives
$H(K)-\EE\log_2\mu_H$. Therefore Lemma~\ref{lem:optimal} implies
\begin{equation}
 \ER(\mathcal J_{s,m_s})
 =-S(\mathcal J_{s,m_s})+H(K)-\EE\log_2\mu_H.
 \label{eq:resource-ree-decomposition}
\end{equation}
Both number-label entropies are negligible per mode:
\[
 H(K,H)\le H(K)+\log_2(m_s+1)
          \le g(\tau m_s+sv)+\log_2(m_s+1)=O(\log s).
\]
Here $H$ has at most $m_s+1$ values, while $K$ has the stated finite
mean. The expectation limits now yield
\begin{align}
 S(\mathcal J_{s,m_s})/s&\longrightarrow g(u)+g(w),\notag\\
 -\EE\log_2\mu_H/s
 &\longrightarrow
 \varphi(t)+\varphi(L-t)-\varphi(N)-\varphi(y).
 \label{eq:res-density-terms}
\end{align}
Their difference is exactly Eq.~\eqref{eq:F}.

The resulting expression has two physically distinct terms. The
quantity $g(u)+g(w)$ is the entropy density of the noisy resource;
the four $\varphi$ terms form its cross-entropy density with the
optimal separable comparison. Their difference is the REE density.
The label entropies $H(K)$ and $H(K,H)$ grow too slowly to survive
division by the number of modes. Although $u,w$ also occur as
symplectic photon numbers of a Gaussian output, this calculation
has optimized over all separable comparisons for the non-Gaussian
fixed-number resource.

Finally, if $N=0$, the preceding Gaussian conditioning argument is
unnecessary. Alice's reference dimension gives
\[
 0\le\frac{\ER(\mathcal J_{s,m_s})}s
       \le\frac{\log_2d_{s,m_s}}s
       \le g(m_s/s)\longrightarrow0=\cF_{\tau,v}(0).
\]
This includes sequences for which $m_s$ grows sublinearly with $s$.
\end{proof}

\section{A common adaptive weak-converse proof}\label{app:converse}
Write $\cE=\cG_{\tau,v}$, $\cF=\cF_{\tau,v}$, and
$f_s(m)=\ER(\mathcal J_{s,m})$.
We first work in the strictly noisy non-entanglement-breaking region,
$\max\{0,\tau-1\}<v<\tau$. Appendix~\ref{app:limits} treats the
entanglement-breaking cases.

The resource-density limit in Lemma~\ref{lem:density} replaces the
quantum-limited energy estimates of
Ref.~\cite[Appendices B 3 and E 3]{LossAmplifier}.
It is an asymptotic statement about one fixed-number resource, so
several steps are needed before applying it to a protocol.
We first obtain a block inequality for arbitrary correlated inputs,
retaining the exact average sector cost. Independent replicas of a
fixed state then convert that average into $\cF(x)$ and remove the
photon-count overhead. Next we prove concavity of $\cF$, and only
then use Jensen's inequality to account for different energies in
successive adaptive transmissions. This order avoids assuming a
noisy-sector bound or concavity before it has been established.

It is helpful to keep the limits in the proof separate:
\begin{center}
\begin{tabular}{p{0.25\textwidth}p{0.31\textwidth}p{0.35\textwidth}}
\toprule
Limit & What is held fixed & Purpose\\
\midrule
Optical and memory cutoffs & A normal input state and a finite block
& Establish the unregularized block inequality on the full spaces\\
Replica number $s\to\infty$ & One input state at one channel use
& Define regularized REE and remove the count-entropy overhead\\
Test-copy number $q\to\infty$ & One final retained protocol state
& Obtain a target-size lower bound on its regularized REE\\
Channel-use number $n\to\infty$ & A sequence of admissible protocols
& Convert the finite-error bounds into a capacity converse\\
\bottomrule
\end{tabular}
\end{center}
The first three are tools for proving inequalities for a fixed finite
protocol. Only the last is the operational communication limit.
In particular, the replica construction supplies no independence
assumption about the inputs occurring at different protocol times.

\subsection{A block inequality for an arbitrary correlated input}\label{app:block}
\begin{lemma}[Noisy-channel block entanglement bound]\label{lem:block}
Let $\rho_{XA^sY}$ have finite $\ER(XA^s:Y)_\rho$ and total
transmitted mean $sx$. Define
$p_m=\Tr(P_m^A\rho)$, where $P_m^A$ projects onto total input number
$m$, and let
$\omega=(\id_X\otimes\cE^{\otimes s}\otimes\id_Y)(\rho)$.
Then
\begin{equation}
 \begin{aligned}
 \ER(X:B^sY)_\omega
 \le{}&\ER(XA^s:Y)_\rho+\sum_mp_m f_s(m)\\
     &+g\bigl(s(\tau x+v)\bigr).
 \end{aligned}
 \label{eq:res-block}
\end{equation}
No dimension or entropy constraint is imposed on the memories $X,Y$.
\end{lemma}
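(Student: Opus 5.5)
The approach is to first prove Eq.~\eqref{eq:res-block} at finite input, output, and memory cutoffs, and then remove the cutoffs. With cutoffs in place, we resolve the input number, apply the sector simulation of Eq.~\eqref{eq:simulation} in each branch, and pay for coherence between sectors with the single-count pinching estimate of Eq.~\eqref{eq:pinching}.

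\emph{Step 1 (truncation).} Project $X$ and $Y$ onto finite-dimensional subspaces, and the input onto total number $m\le M_0$, and renormalize. Also truncate Bob's output to total number $k\le K_0$; this truncation is applied after the channel and can be done locally. All quantities are then finite-dimensional. The cutoffs are removed at the end.

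\emph{Step 2 (coherent sector simulation).} Alice holds $A^s$, so she can apply the isometry $P^A_m\mapsto\ket m_F$ into a local flag register $F$. Because this is a local isometry, $\ER(XA^sF:Y)=\ER(XA^s:Y)_\rho$. The state then has the form $\sum_{m,m'}\sqrt{p_mp_{m'}}\,\ket m\bra{m'}_F\otimes(\cdots)$. We avoid dephasing $F$. Instead, Alice prepares the resource $\bigotimes_m\mathcal J_{s,m}$ in superposition indexed by $F$, or equivalently a flagged resource $\sum_m\proj m\otimes\mathcal J_{s,m}$. She controls a teleportation measurement on $F$, and every branch reproduces $\cE^{\otimes s}$ exactly. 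This approach is delicate, because coherence across $m$ must survive. The cleaner route is the following. The channel commutes with nothing that resolves $m$ on the output side, but Bob's output count $k$ is correlated with $m$. Apply the pinching $\cP_B$ on Bob's total output number. Eq.~\eqref{eq:pinching} gives
\[
\ER(\omega)\le\ER(\cP_B\omega)+H(K),
\]
with $H(K)\le g(s(\tau x+v))$ because $\EE K=s(\tau x+v)$.

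Next we must bound $\ER(\cP_B\omega)$. The key observation is that $\cE^{\otimes s}$ commutes with the phase rotation $e^{i\theta\hat N}$ in the sense that $\cE(e^{i\theta \hat n}\rho e^{-i\theta\hat n})=e^{i\tau'\ldots}$. This holds for the phase-insensitive channel: input rotation by $\theta$ corresponds to output rotation by $\theta$. Dephasing the output total number therefore equals $\cE^{\otimes s}$ applied to the input-phase-twirled state, which is block diagonal in $m$. I would verify that full output pinching dominates the output twirl: twirl output $=$ channel(twirled input), and $\cP_B$ is a further local operation applied to the twirl. Since the twirl on the output side is Bob-local, monotonicity gives $\ER(\cP_B\omega)\le\ER(\text{twirl}_B\,\omega)$, and $\text{twirl}_B\,\omega=(\id\otimes\cE^{\otimes s}\otimes\id)(\text{twirl}_A\rho)$. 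Alice's twirl then produces $\bigoplus_mp_m\rho_m$ with locally distinguishable branches; she may record $m$ in a flag. By Eq.~\eqref{eq:flags} and the simulation gain bound of Appendix~\ref{app:simulation},
\[
\ER(\cP_B\omega)\le\sum_mp_m\bigl[\ER(XA^s:Y)_{\rho_m}+f_s(m)\bigr]\le \ER(XA^s:Y)_\rho+\sum_mp_mf_s(m).
\]
The last inequality uses selective monotonicity (the first part of Eq.~\eqref{eq:flags}) for Alice's local number measurement.

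\emph{Step 3 (removing cutoffs, the main obstacle).} At fixed cutoffs, the inequality holds with the truncated state, whose $p_m$, means, and REE converge appropriately. The truncated input REE is controlled by monotonicity under the local projection instrument (up to renormalization, via selective monotonicity), and the $p_m$ and $x$ converge. The left side is handled by lower semicontinuity of $\ER$~\cite[Corollary 20]{LamiShirokov}, following the output-first order of Ref.~\cite[Appendix E 4]{LossAmplifier}. In that order, Bob's output cutoff is lifted first, then the input cutoff $M_0$, and then the memory cutoffs. The hard part is ensuring that $\sum_m p_m f_s(m)$ and $g(\cdot)$ pass to the limit. Since $f_s(m)\le\log_2 d_{s,m}\le s\,g(m/s)$ and the mean is finite, dominated convergence handles the sum, and $g$ is continuous and monotone in the mean.
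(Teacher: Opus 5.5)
Your proposal is correct and is essentially the paper's proof: output pinching with the single-count bound $H(K)\le g(s(\tau x+v))$, phase covariance giving $\cP_B\omega=\cE^{\otimes s}(\cP_A\rho)=\sum_m p_m\omega_m$, the sector simulation gain in each branch combined with convexity and selective monotonicity, and output-first removal of the cutoffs by lower semicontinuity. Two of your steps are the paper's in different form: the common-phase twirl on Bob's block is exactly $\cP_B$, so your ``domination'' step is an equality, and your flag-and-trace step is the paper's convexity step.

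The one real difference is the cutoffs. You project and renormalize, so you must carry success-probability factors such as $\ER(\rho^{(c)})\le\ER(\rho)/p_c$ and $p_m^{(c)}\le p_m/p_c$ to the limit $p_c\to1$. The paper instead uses trace-preserving vacuum-replacement cutoffs, which commute with number pinching and bound the truncated right-hand side by the original one directly.
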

\begin{proof}
The first term on the right side of Eq.~\eqref{eq:res-block}
accounts for entanglement already shared before this block is
transmitted. The second is the average cost of simulating the
conditional photon sectors. The third pays for the coherence
between them. This is the structure needed for an adaptive
converse: it bounds the increase in entanglement, rather than
assuming that the incoming block is independent of either memory.

We use the number-sector reduction from
Ref.~\cite[Appendices B 4 and E 4]{LossAmplifier}.
The simulation and cutoff mechanisms are the same; the resource
cost is now the exact noisy quantity $f_s(m)$.

\smallskip\noindent\emph{1. Relate input and output pinching.}
Initially take finite-dimensional memories and a finite optical input
cutoff. An output cutoff will be understood during the
finite-dimensional entropy step and removed below.
The normalized state conditioned on Alice's total input number is
$\rho_m=P_m^A\rho P_m^A/p_m$ for $p_m>0$.
Let $\omega_m=\cE^{\otimes s}(\rho_m)$, with identity maps on the
memories understood.

Total-number pinching is equivalent to averaging a common phase
rotation on all modes:
\[
 \cP_A(\rho)=\frac1{2\pi}\int_0^{2\pi}
       e^{i\theta\hat N_A}\rho e^{-i\theta\hat N_A}\,\dd\theta.
\]
Phase covariance moves each such rotation through the channel to
Bob's block, so
\begin{equation}
 \cP_B\omega
       =\cE^{\otimes s}(\cP_A\rho)
       =\sum_mp_m\omega_m.
 \label{eq:phase-pinching}
\end{equation}
The identity remains valid with a phase-covariant output cutoff.
It says that output total-number pinching removes the contribution
of coherence between distinct input total numbers. Within each
sector, arbitrary mode correlations and correlations with both
memories remain.

The pinched input is used to compare entanglement quantities.
The actual protocol can retain its original superposition of
different numbers. The output coherence estimate in Step 3
restores an upper bound for that unpinched output, so the proof
does not restrict the sender to number-diagonal states.

\smallskip\noindent\emph{2. Account for every conditional sector.}
For the sector input $\rho_m$, the exact LOCC simulator uses an
independent resource $\mathcal J_{s,m}$.
Its initial laboratory split is
$(XA^sR):(YB^{\prime s})$, where $R$ and $B^{\prime s}$ are the two
resource shares. Subadditivity followed by LOCC monotonicity gives
\begin{equation}
 \ER(X:B^sY)_{\omega_m}
 \le\ER(XA^s:Y)_{\rho_m}+f_s(m).
 \label{eq:sector-gain}
\end{equation}
Applying convexity to Eq.~\eqref{eq:phase-pinching} and selective
monotonicity to Alice's number measurement yields the full chain
\begin{align}
 \ER(\cP_B\omega)
 &\le\sum_mp_m\ER(\omega_m)\notag\\
 &\le\sum_mp_m\ER(\rho_m)+\sum_mp_mf_s(m)\notag\\
 &\le\ER(\rho)+\sum_mp_mf_s(m).
 \label{eq:block-chain}
\end{align}
The states $\rho_m$ need not be separable. Their pre-existing
entanglement is accounted for by the first term, rather than
discarded from the argument.

\smallskip\noindent\emph{3. Bound the coherence removed at the output.}
Equation~\eqref{eq:pinching} gives
\begin{equation}
 \ER(\omega)\le\ER(\cP_B\omega)+H(K),~~
 H(K)\le g\bigl(s(\tau x+v)\bigr),
 \label{eq:block-overhead}
\end{equation}
where $K$ is Bob's single total output count. Combining this with
Eq.~\eqref{eq:block-chain} proves Eq.~\eqref{eq:res-block} in finite
dimensions. The coherence is restored only in the inequality:
the proof does not require an operation that undoes the pinching.
The overhead is one count entropy, not a sum of separate
mode-count entropies.

\smallskip\noindent\emph{4. Remove the output cutoff first.}
Noise can produce arbitrarily many output photons even when the
input support is finite. Use the local vacuum-replacement channel
\begin{equation}
 \mathcal C_J(T)=P_{\le J}TP_{\le J}
       +\Tr[(I-P_{\le J})T]\proj{0^{\otimes s}},
 ~~ P_{\le J}=\sum_{k=0}^J P_k,
 \label{eq:cutoff}
\end{equation}
on Bob's optical block. It is trace preserving, commutes with total
phase rotations and number pinching, and cannot increase the
output mean. Appending it to a sector simulator does not increase
the resource cost. Hence the same finite-dimensional argument
applies to
$\omega^{(J)}=(\id_{XY}\otimes\mathcal C_J)(\omega)$,
with a right-hand side bounded uniformly by that in
Eq.~\eqref{eq:res-block}.

For a fixed input and fixed memory cutoffs,
$\omega^{(J)}\to\omega$ in trace norm as $J\to\infty$.
Lower semicontinuity of unregularized REE
\cite[Corollary 20]{LamiShirokov} gives
\[
 \ER(\omega)\le\liminf_{J\to\infty}\ER(\omega^{(J)}).
\]
Thus the block bound already holds with infinite output support
before we remove the input or memory cutoffs.

\smallskip\noindent\emph{5. Remove the input and memory cutoffs.}
Now use the local finite-rank memory cutoffs and input vacuum
replacement of
Ref.~\cite[Appendix B 4]{LossAmplifier}.
Each memory cutoff is a local trace-preserving map and leaves the
optical marginal unchanged. The input cutoff
$\mathcal C_L$ preserves $p_m$ for $0<m\le L$ and sends all
discarded probability to $m=0$.
Since $f_s(0)=0$, the new average resource cost is at most the
original one. The mean also decreases, and the initial REE cannot
increase under these local maps.

The original resource average is finite:
\begin{equation}
 0\le\sum_mp_m f_s(m)
 \le\sum_mp_m\log_2d_{s,m}
 \le s\sum_mp_m g(m/s)
 \le sg(x)<\infty.
 \label{eq:resource-domination}
\end{equation}
The last step uses concavity of the known thermal entropy $g$,
not concavity of the new function $\cF$.
Consequently every truncated-input bound has a right-hand side
no larger than the desired original bound.
The input and memory cutoff states converge in trace norm.
Channel contractivity and a second application of lower
semicontinuity complete the limit.

The local cutoff is trace preserving because the discarded weight
is replaced by vacuum. Hence it does not condition on a successful
truncation event or divide the energy by a success probability.
For a normal state, finite-rank memory projections approach the
identity in trace norm, even if the memory entropy is infinite.
The optical cutoff additionally decreases the photon mean.
Lower semicontinuity transfers the finite-dimensional inequality
to the limit while retaining its upper bound. These observations
explain why no memory entropy appears in the block estimate.

This is the output-first version of the amplifier cutoff argument
in Ref.~\cite[Appendix E 4]{LossAmplifier}, with the noisy average
resource cost controlled by Eq.~\eqref{eq:resource-domination}.
No memory entropy has been evaluated. All cutoff limits have been
taken for unregularized REE, so none is exchanged with the
regularization limit that follows.
\end{proof}

\subsection{Replicas remove the photon-count overhead}\label{app:replicas}
We next use the replica method of
Ref.~\cite[Appendix B 5]{LossAmplifier}, also used for amplification
in its Appendix E 5. The replicas here are independent copies of
one fixed input state. This supplies a large optical block without
assuming independence between different times in an adaptive
protocol.

For those replicas, the photon-count law is the product of the
diagonal number distributions of $\rho_A$. Off-diagonal matrix
elements of $\rho_A$ do not affect this classical law, although
their effect on the output has already been covered by the block
coherence term. The law of large numbers therefore uses exactly
the finite mean $x$ required by the operational energy constraint.
No Gaussian form or finite photon-number variance is needed.

\begin{lemma}[One-use regularized entanglement gain]\label{lem:gain}
For an input $\rho_{XAY}$ with
$\ER(XA:Y)_\rho<\infty$ and
$x=\Tr(\hat n_A\rho_A)<\infty$, let
$\omega=(\id_X\otimes\cE\otimes\id_Y)(\rho)$. Then
\begin{equation}
 \ERi(X:BY)_\omega
       \le\ERi(XA:Y)_\rho+\cF(x).
 \label{eq:gain-proof}
\end{equation}
\end{lemma}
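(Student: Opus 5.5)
We apply Lemma~\ref{lem:block} to $s$ independent replicas of the fixed state, i.e.\ to $\rho^{\otimes s}$ viewed as one block. The bipartition is $X^sA^s:Y^s$ before transmission and $X^s:B^sY^s$ after it. Since $\rho^{\otimes s}$ has total transmitted mean $sx$ and $\ER(X^sA^s:Y^s)_{\rho^{\otimes s}}\le s\ER(XA:Y)_\rho<\infty$ by subadditivity, the lemma applies and gives
\[
 \ER(\omega^{\otimes s})\le\ER(\rho^{\otimes s})+\sum_m p^{(s)}_m f_s(m)+g\bigl(s(\tau x+v)\bigr).
\]
Here $p^{(s)}$ is the law of $M_s=\sum_{i=1}^s n_i$, with the $n_i$ i.i.d.\ copies of the diagonal photon-number law of $\rho_A$, which has mean $x$. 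We then divide by $s$ and let $s\to\infty$. The left side tends to $\ERi(X:BY)_\omega$ and the first right-hand term tends to $\ERi(XA:Y)_\rho$; both limits exist because the REEs are finite and subadditive. The count overhead $g(s(\tau x+v))/s=O(\log s/s)$ vanishes. What remains is to show
\[
 \limsup_{s\to\infty}\frac1s\sum_m p^{(s)}_m f_s(m)\le\cF(x).
\]

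For this average-cost limit we use only the weak law of large numbers, since only a first moment is available: $M_s/s\to x$ in probability. We write $f_s(m)/s=\phi_s(m/s)$. Lemma~\ref{lem:density} says $\phi_s(m_s/s)\to\cF(N)$ along every sequence with $m_s/s\to N$. A standard subsequence argument upgrades this to local uniform convergence: for every $\epsilon>0$ there are $\delta>0$ and $s_0$ such that $|f_s(m)/s-\cF(x)|<\epsilon$ whenever $|m/s-x|<\delta$ and $s\ge s_0$. The argument is by contradiction: a violating sequence $(s_j,m_j)$ with $m_j/s_j\to x$ would contradict the lemma. This needs continuity of $\cF$ at $x$, which holds because $\cF$ is an explicit continuous expression in $(t,u,w,y)$ and $t$ depends continuously on $N$. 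The typical window $|M_s/s-x|<\delta$ therefore contributes at most $\cF(x)+\epsilon$.

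The main obstacle is the atypical tail. By the sector dimension bound~\eqref{eq:sector-dimension-entropy}, $0\le f_s(m)/s\le g(m/s)$, so the tail contributes at most $\EE[g(M_s/s)\mathbf 1_{\{|M_s/s-x|\ge\delta\}}]$. We control this through uniform integrability of $g(M_s/s)$, using the estimate~\eqref{eq:res-uniform-integrability}: the variables $X_s=M_s/s$ have constant mean $x$, and $g(X)/X\to0$ as $X\to\infty$, so the part with $M_s/s>R$ is uniformly small in $s$. On the bounded range $M_s/s\le R$, $g$ is bounded and the deviation probability tends to zero. Hence the tail vanishes. Letting $\epsilon\to0$ gives~\eqref{eq:gain-proof}.

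The case $x=0$ is simpler: the input is vacuum almost surely, so $p^{(s)}_0=1$ and $f_s(0)=0$, and the average cost is zero.
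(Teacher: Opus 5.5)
Your proposal is correct and follows essentially the same route as the paper: it applies the block bound of Lemma~\ref{lem:block} to $\rho^{\otimes s}$, upgrades Lemma~\ref{lem:density} to local uniform convergence by contradiction, obtains uniform integrability from $0\le f_s(m)/s\le g(m/s)$ using only the first moment $x$, and then takes the regularization limit (finiteness of $\ER(\omega)$ follows from your block inequality at $s=1$). One small remark is that the contradiction argument does not need continuity of $\cF$ at $x$, because after padding the violating subsequence with $m_s=\lfloor sx\rfloor$, Lemma~\ref{lem:density} itself already gives the limit $\cF(x)$.
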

\begin{proof}
\smallskip\noindent\emph{1. Apply the block bound to copies of the fixed input.}
Take $s$ copies of $\rho$, regard $A^s$ as the transmitted block,
and regard $X^s,Y^s$ as its memories. If $M_s$ is their total input
photon number, Lemma~\ref{lem:block} gives
\begin{equation}
 \begin{aligned}
 \frac{\ER(\omega^{\otimes s})}{s}
 \le{}&\frac{\ER(\rho^{\otimes s})}{s}
       +\EE\frac{f_s(M_s)}s\\
     &+\frac{g(s(\tau x+v))}{s}.
 \end{aligned}
 \label{eq:replica-bound}
\end{equation}
The distribution of $M_s$ is the distribution of a sum of $s$
independent copies of the photon count of $\rho_A$.
This statement holds even when $\rho_A$ is not diagonal in number:
the number probabilities of the tensor product factor.
The summands have finite mean $x$; no variance assumption is made.

\smallskip\noindent\emph{2. Apply the deterministic density lemma to a random count.}
The first-moment law of large numbers gives $M_s/s\to x$ in
probability. Lemma~\ref{lem:density} says that
$f_s(m_s)/s\to\cF(x)$ for \emph{every} deterministic sequence
$m_s/s\to x$. This has the following local uniform consequence:
for every $\epsilon>0$, there exist $\delta>0$ and $s_0$ such that
\begin{equation}
 s\ge s_0,\quad |m/s-x|<\delta
 \quad\Longrightarrow\quad
 \left|\frac{f_s(m)}s-\cF(x)\right|<\epsilon.
 \label{eq:local-uniform-density}
\end{equation}
To verify this implication, suppose it failed. One could then choose
$s_j\to\infty$ and integers $m_j$ with
$|m_j/s_j-x|<1/j$, while the displayed error remains at least
$\epsilon$. Fill the remaining indices with $m_s=\lfloor sx\rfloor$.
The resulting sequence would contradict Lemma~\ref{lem:density}.

Combining Eq.~\eqref{eq:local-uniform-density} with the law of large
numbers gives $f_s(M_s)/s\to\cF(x)$ in probability.
For $x=0$, the input count is zero almost surely, and the same
conclusion follows directly from $f_s(0)=\cF(0)=0$.

\smallskip\noindent\emph{3. Average using only the first moment.}
The resource bound
$0\le f_s(M_s)/s\le g(M_s/s)$ controls the random tail:
\begin{equation}
 \EE\!\left[g(M_s/s)\mathbf1_{\{M_s/s>R\}}\right]
 \le x\sup_{z>R}\frac{g(z)}z\longrightarrow0.
 \label{eq:replica-ui}
\end{equation}
On $M_s/s\le R$, the dominating entropy is bounded.
Thus the resource densities are uniformly integrable, and
convergence in probability can be upgraded to convergence of
their averages:
\begin{equation}
 \lim_{s\to\infty}\EE\frac{f_s(M_s)}s=\cF(x).
 \label{eq:averaged-density}
\end{equation}
This is the new averaging step needed for noisy channels. It uses
the sublinear growth of $g$, and remains valid for photon
distributions with infinite variance.

\smallskip\noindent\emph{4. Take the regularization limit.}
The input unregularized REE is finite by hypothesis.
The output unregularized REE is already finite from
Lemma~\ref{lem:block} at $s=1$.
The first two REE terms in Eq.~\eqref{eq:replica-bound} therefore
converge to their regularizations, while
$g(s(\tau x+v))/s\to0$.
Using Eq.~\eqref{eq:averaged-density} proves
Eq.~\eqref{eq:gain-proof}, equivalently Eq.~\eqref{eq:gain-main}.
\end{proof}

The replica parameter $s$ is an auxiliary proof parameter.
For an adaptive protocol, we apply Lemma~\ref{lem:gain} to the
actual unconditional state at each fixed channel use, with the
protocol held fixed. The replicas do not replace successive
adaptive inputs by independent ones.

\subsection{Why the energy function is concave}\label{app:concavity}
The gain lemma bounds one use at its actual mean input energy.
To combine uses with different means, we still need concavity of
$\cF$. Instead of differentiating the closed formula, we prove
concavity by preparing a finite ensemble of Gaussian inputs.
The essential additional fact is that the regularized REE of the
flagged output ensemble equals the average of its individual
densities. We establish that fact below, rather than assume a
general additivity or flags property for regularized REE.

This step should be distinguished from a general Gaussian
extremality argument~\cite{WolfExtremality}. The input of the
one-use gain lemma is still arbitrary. Gaussian states enter
here because their number-sector decomposition gives a family
whose regularized REE can be computed exactly. That family then
tests the dependence of the bound on energy. The proof derives
the needed equality for flagged mixtures rather than presuming
a general flags identity for regularized REE.

\begin{lemma}[Gaussian-state density and concavity]\label{lem:concavity}
Let $\omega_N$ be the output obtained by sending one arm of a two-mode
squeezed vacuum of mean $N$ through $\cE$. Then
\begin{equation}
 \ERi(\omega_N)=\cF(N).
 \label{eq:gaussian-density}
\end{equation}
The function $\cF$ is concave and nondecreasing.
\end{lemma}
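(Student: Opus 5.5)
The plan is to prove Eq.~\eqref{eq:gaussian-density} by matching an upper bound from Lemma~\ref{lem:gain} with a lower bound from selective monotonicity and Lemma~\ref{lem:density}. Concavity then follows from the same two bounds applied to a flagged Gaussian ensemble, and monotonicity is a formal consequence. No general additivity of regularized REE is assumed; the only additivity used is the product statement of Lemma~\ref{lem:optimal}.

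\emph{Upper bound.} Write $\omega_N=(\id_R\otimes\cE)(\psi_N)$, with $\psi_N$ the two-mode squeezed vacuum on $RA$. Apply Lemma~\ref{lem:gain} with Alice's memory $X=R$, transmitted mode $A$, and Bob's memory $Y$ trivial. Before transmission the whole state lies in Alice's laboratory, so $\ER(XA:Y)=0$, and the transmitted mean is $x=N$. The lemma then gives $\ERi(\omega_N)\le\cF(N)$. The same lemma at $s=1$ (Lemma~\ref{lem:block}) shows that $\ER(\omega_N)<\infty$. Hence $\ERi(\omega_N)=\lim_s s^{-1}\ER(\omega_N^{\otimes s})$ exists as a genuine limit.

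\emph{Lower bound.} In $\omega_N^{\otimes s}$, Alice measures the total reference number $M$, a local projective instrument. As noted in Step~1 of Lemma~\ref{lem:density}, the conditional state at $M=m$ is exactly $\mathcal J_{s,m}$. Selective monotonicity, Eq.~\eqref{eq:flags}, then gives
\[
 \ER(\omega_N^{\otimes s})\ge\sum_m\Pr(M=m)f_s(m).
\]
Here $M$ is a sum of $s$ i.i.d. thermal counts of mean $N$, so $M/s\to N$ in probability. The local-uniform form of Lemma~\ref{lem:density}, Eq.~\eqref{eq:local-uniform-density}, gives $f_s(M)/s\to\cF(N)$ in probability. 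Since $f_s\ge0$, Fatou's lemma (or the uniform integrability of Eq.~\eqref{eq:replica-ui}) yields $\liminf_s s^{-1}\EE f_s(M)\ge\cF(N)$. Dividing by $s$ proves $\ERi(\omega_N)\ge\cF(N)$, and with the upper bound this establishes Eq.~\eqref{eq:gaussian-density}.

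\emph{Concavity.} Take $N=\sum_ip_iN_i$ as a finite convex combination. Form the flagged input $\bigoplus_ip_i\,\proj{i}_{F_A}\otimes\proj{i}_{F_B}\otimes\psi_{N_i}$, with classical flag copies held by both parties. Across $XA:Y$ this input is separable, so its REE is zero, and its transmitted mean is $N$. Sending $A$ through $\cE$ produces $\Omega=\bigoplus_ip_i\,\omega_{N_i}$, and Lemma~\ref{lem:gain} gives $\ERi(\Omega)\le\cF(N)$.

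For the reverse inequality, both parties read the flags of $\Omega^{\otimes s}$. This yields a type $(s_i)_i$ together with the state $\bigotimes_i\omega_{N_i}^{\otimes s_i}$. Alice then measures each block's reference number $m_i$, which leaves the product $\bigotimes_i\mathcal J_{s_i,m_i}$. Selective monotonicity and the product additivity of Lemma~\ref{lem:optimal} give
\[
 \ER(\Omega^{\otimes s})\ge\EE\sum_if_{s_i}(m_i).
\]
Now $s_i/s\to p_i$ and $m_i/s_i\to N_i$ in probability. The local-uniform density statement, combined with Fatou's lemma, gives the limit $\sum_ip_i\cF(N_i)$. Hence $\sum_ip_i\cF(N_i)\le\ERi(\Omega)\le\cF(N)$, which is concavity.

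\emph{Monotonicity.} Lemma~\ref{lem:density} gives $\cF\ge0$ and $\cF(0)=0$. A concave function that is nonnegative on $[0,\infty)$ cannot decrease anywhere, since a negative slope would force it below zero at large $N$. So $\cF$ is nondecreasing.

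I expect the main obstacle to be the product lower bound in the concavity step. It requires joint control of two random sizes: the block lengths $s_i$ are random, while Lemma~\ref{lem:density} is stated for deterministic sequences. I would handle this first by conditioning on typical types with $|s_i/s-p_i|<\delta$ and applying Eq.~\eqref{eq:local-uniform-density} blockwise. Only a lower bound is needed, so atypical types can simply be dropped because $f\ge0$. A secondary point to check is that the flagged input has finite REE and finite mean, as the hypotheses of Lemma~\ref{lem:gain} require; both hold trivially here.
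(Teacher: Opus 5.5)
Your proposal is correct. Its concavity and monotonicity parts follow the paper closely: the same flagged Gaussian ensemble, the same lower bound (flag reading, per-group reference-number pinching, product additivity from Lemma~\ref{lem:optimal}), and the same upper bound $\ERi(\Omega)\le\cF(\bar N)$ from Lemma~\ref{lem:gain}.

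The genuine difference is in how you prove the identity $\ERi(\omega_N)=\cF(N)$.
\begin{itemize}
\item The paper uses a finite-$s$ two-sided sandwich, Eq.~\eqref{eq:res-group-sandwich}. Its upper half comes from an explicit block-diagonal separable comparison $\Sigma$ assembled from the optimal conditional comparisons. That step needs a finiteness check on the cross entropy and the bound $S(\cP\eta)-S(\eta)\le\sum_iH(M_i)$ on the entropy gained under pinching. Averaged over group sizes, the same sandwich also gives the exact value $\ERi(\Omega)=\sum_ip_i\cF(N_i)$.
\item You take the upper bound directly from Lemma~\ref{lem:gain} with $X=R$ and trivial $Y$. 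This is simply the one-point case of the paper's Step~4, and it is not circular, since that lemma does not use concavity. You then need only the lower half of the sandwich, both for $\omega_N$ and for $\Omega$.
\end{itemize}

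Your route is shorter and avoids the cross-entropy bookkeeping entirely. The price is that your identity for $\omega_N$ rests on the full simulation, cutoff and replica machinery of Lemmas~\ref{lem:block} and~\ref{lem:gain}. The paper's upper bound, by contrast, is a direct entropy computation on the Gaussian state with an explicit error term at every finite $s$. You also obtain only $\ERi(\Omega)\ge\sum_ip_i\cF(N_i)$ rather than equality, but that is all concavity requires.

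Your treatment of the random group sizes is a valid substitute for the paper's dominated-convergence argument with $b_i(r)\le g(N_i)$. Restricting to typical types, applying Eq.~\eqref{eq:local-uniform-density} blockwise, and discarding atypical types via $f\ge0$ suffices, since only a lower bound is needed.
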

\begin{proof}
\smallskip\noindent\emph{1. Bound a product of groups of Gaussian outputs.}
For a finite list of energies $N_i$ and nonnegative integer group
sizes $n_i$, define
\[
 \eta_{\boldsymbol n}=\bigotimes_{i:n_i>0}
                          \omega_{N_i}^{\otimes n_i}.
\]
Set
\begin{equation}
 A(\boldsymbol n)=\sum_{i:n_i>0}\EE f_{n_i}(M_i).
 \label{eq:group-average}
\end{equation}
Here $M_i$ is Alice's total reference number in group $i$.
Pinching that number in every group turns the corresponding
squeezed inputs into maximally entangled number-sector inputs.
The conditional outputs are therefore products of
$\mathcal J_{n_i,m_i}$, with the numbers locally distinguishable
on Alice's side.
The unregularized flags identity and the product additivity from
Lemma~\ref{lem:optimal} give
$\ER(\cP\eta_{\boldsymbol n})=A(\boldsymbol n)$.
LOCC monotonicity proves the lower inequality in
\begin{equation}
 \begin{aligned}
 A(\boldsymbol n)
 &\le\ER\!\left(\bigotimes_i\omega_{N_i}^{\otimes n_i}\right)\\
 &\le A(\boldsymbol n)+\sum_{i:n_i>0}H(M_i).
 \end{aligned}
 \label{eq:res-group-sandwich}
\end{equation}

\smallskip\noindent\emph{2. Upper inequality and its finiteness.}
For each sector resource take the optimal comparison furnished by
Lemma~\ref{lem:optimal}. For the group product, take the product of
these comparisons conditional on all $m_i$, and then their direct
sum $\Sigma$ with the actual number probabilities.
This is separable and block diagonal under the group pinching.
Its relative entropy from the pinched state is
\[
 D(\cP\eta_{\boldsymbol n}\Vert\Sigma)=A(\boldsymbol n).
\]
The Gaussian state $\eta_{\boldsymbol n}$ has finite entropy.
The entropy increase under these number pinchings is bounded by
$\sum_iH(M_i)\le\sum_i g(n_iN_i)<\infty$.
Also $A(\boldsymbol n)\le\sum_i n_i g(N_i)<\infty$.
Thus the comparison cross entropy is finite:
\begin{equation}
 -\Tr[(\cP\eta_{\boldsymbol n})\log_2\Sigma]
 =S(\cP\eta_{\boldsymbol n})+A(\boldsymbol n)<\infty.
 \label{eq:group-finite-cross-entropy}
\end{equation}
Since $\log_2\Sigma$ is block diagonal, its expectation is unchanged
by pinching. Consequently
\begin{equation}
 D(\eta_{\boldsymbol n}\Vert\Sigma)
 =A(\boldsymbol n)+S(\cP\eta_{\boldsymbol n})
                         -S(\eta_{\boldsymbol n})
 \le A(\boldsymbol n)+\sum_iH(M_i).
 \label{eq:group-cross-entropy-identity}
\end{equation}
This proves the upper inequality in
Eq.~\eqref{eq:res-group-sandwich}.
One may first resolve finitely many number outcomes and one tail;
the finite entropy and cross-entropy bounds justify the countable
limit. In particular, no subtraction of infinite entropies is
involved.

For a single group of size $s$, divide
Eq.~\eqref{eq:res-group-sandwich} by $s$.
Equation~\eqref{eq:averaged-density} identifies the averaged
resource term, while $H(M)/s\le g(sN)/s\to0$.
This proves Eq.~\eqref{eq:gaussian-density}.

\smallskip\noindent\emph{3. Compute the density of a classically flagged ensemble.}
Let $p_i$ be a finite probability distribution and retain its flag
at both laboratories:
\begin{equation}
 \Omega=\sum_i p_i\proj i_{C_A}\otimes\proj i_{C_B}
                             \otimes\omega_{N_i}.
 \label{eq:flagged-gaussian}
\end{equation}
In $\Omega^{\otimes s}$, condition on the entire classical flag
sequence. Local permutations regroup equal flags, giving group
sizes $n_i$ with $n_i/s\to p_i$ in probability.
Conditional unregularized REE is the REE of
$\eta_{\boldsymbol n}$, so the flags identity lets us average
Eq.~\eqref{eq:res-group-sandwich} over these group sizes.
Its overhead is bounded uniformly by
$\sum_i g(sN_i)=O(\log s)$.

To justify the other average explicitly, put
$b_i(r)=r^{-1}\EE f_r(M_i)$ for $r\ge1$, where here $M_i$ is the
total count in a group of size $r$ at energy $N_i$.
Equation~\eqref{eq:averaged-density} gives
$b_i(r)\to\cF(N_i)$, and
$0\le b_i(r)\le g(N_i)$.
Whenever $p_i>0$, $n_i$ tends to infinity in probability, and thus
\[
 \frac{A(\boldsymbol n)}s
 =\sum_{i:n_i>0}\frac{n_i}s\,b_i(n_i)
 \longrightarrow\sum_i p_i\cF(N_i).
\]
The finite list of bounds $g(N_i)$ controls the expectation of this
expression. Terms with $p_i=0$ are absent. Dividing the averaged
sandwich by $s$ therefore yields
\begin{equation}
 \ERi(\Omega)=\sum_i p_i\cF(N_i).
 \label{eq:res-flagged-density}
\end{equation}
The flags in Eq.~\eqref{eq:flagged-gaussian} were classical from the
outset. Their conditioning does not remove quantum coherence, and
there is no additional penalty $sH(p)$.

\smallskip\noindent\emph{4. Apply the one-use gain bound to that ensemble.}
Alice samples $i$, communicates the classical flag, and locally
prepares a two-mode squeezed vacuum of mean $N_i$.
Both optical arms initially belong to Alice, so this flagged input
has zero entanglement across the laboratory cut.
Sending one arm through $\cE$ produces $\Omega$ with unconditional
input mean $\bar N=\sum_i p_iN_i$.
Lemma~\ref{lem:gain} and Eq.~\eqref{eq:res-flagged-density} give
\begin{equation}
 \sum_i p_i\cF(N_i)=\ERi(\Omega)
       \le\cF\!\left(\sum_i p_iN_i\right).
 \label{eq:concavity}
\end{equation}
This is concavity. The gain lemma was obtained without using it,
so the argument is not circular.

The direction of the inequality has a simple preparation
interpretation. A classical choice among energies $N_i$ can be
made locally, and its mean transmitted energy is $\bar N$.
The parties keep the choice as an accessible classical record.
The flagged output has the average entanglement density on the
left, whereas the one-use gain bound constrains it by the function
evaluated at $\bar N$. A mixture of thermal inputs need not itself
be thermal: the retained flag is what makes this comparison
possible. General energy averaging in an adaptive protocol is
applied only after this concavity statement has been established.

Nonnegativity and finiteness follow from the resource-density limit,
which also gives $\cF(N)\le g(N)$ and $\cF(0)=0$.
Finally, a finite concave nonnegative function on the entire
half-line cannot have a negative secant slope.
If $\cF(b)<\cF(a)$ for $0\le a<b$, then for every $z>b$ concavity
would give
\[
 \frac{\cF(z)-\cF(b)}{z-b}
 \le\frac{\cF(b)-\cF(a)}{b-a}<0.
\]
The resulting linear upper bound would become negative for large
$z$, contradicting nonnegativity. Hence $\cF$ is nondecreasing.
\end{proof}

\subsection{From one use to a fully adaptive protocol}\label{app:adaptive}
We now use the adaptive accounting framework of
Ref.~\cite[Appendices B 5 and E 5]{LossAmplifier}.
The previous lemmas supply its channel-dependent gain function,
including the finiteness and concavity properties needed here.

Fix an admissible protocol with $n$ channel uses.
Immediately before use $i$, write its actual unconditional state
as $\rho^{(i)}_{X_iA_iY_i}$.
Alice owns $X_iA_i$, Bob owns $Y_i$, and $A_i$ is the next mode
to be sent. The memories include all local quantum registers and
all active classical histories.
Thus feedback and correlations between histories and transmitted
states are already represented in $\rho^{(i)}$.
Tracing over Eve gives this laboratory state; it is not conditioned
on a selected successful history.

For example, if $h$ labels the public history before use $i$,
then its optical marginal and mean are
\begin{equation}
 \rho^{(i)}_{A_i}=\sum_h p_h\rho^{(i,h)}_{A_i},
 ~~
 x_i=\sum_h p_h\Tr(\hat n_{A_i}\rho^{(i,h)}_{A_i}).
 \label{eq:history-energy-average}
\end{equation}
The conditional energies may vary strongly with $h$; the bound
uses their unconditional average. The full state still keeps the
correlations between the history, the transmitted mode, and the
memories. This is why applying the one-use estimate does not
discard the advantage potentially supplied by feedback.

Let $\omega^{(i)}$ be the state immediately after the channel call
and let $x_i=\Tr(\hat n_{A_i}\rho^{(i)}_{A_i})$.
Lemma~\ref{lem:gain} gives
\begin{equation}
 \ERi(X_i:B_iY_i)_{\omega^{(i)}}
 \le\ERi(X_iA_i:Y_i)_{\rho^{(i)}}+\cF(x_i).
 \label{eq:one-use-adaptive}
\end{equation}
Before the next use, arbitrary allowed LOCC, local preparation,
and local discarding cannot increase $\ERi$.
The pre-use entanglement at step $i+1$ is therefore at most the
post-use value at step $i$.
Starting with zero entanglement and iterating gives
\begin{equation}
 \ERi(\zeta_n)\le\sum_{i=1}^n\cF(x_i),
 \label{eq:telescope}
\end{equation}
where $\zeta_n$ is the final retained state after the last LOCC.

There is one domain condition to check before using the gain lemma
repeatedly: its input must have finite unregularized REE. The
following argument establishes this from the block inequality,
so it does not assume the conclusion of the adaptive telescope.

For completeness, all intermediate states meet the finite
unregularized-REE hypothesis of Lemma~\ref{lem:gain}.
At $s=1$, the reference number sectors are one-dimensional, so
$f_1(m)=0$. Lemma~\ref{lem:block} therefore bounds the one-use
unregularized increase by $g(\tau x_i+v)<\infty$.
Every $x_i$ is finite because the nonnegative means obey the finite
total budget. Induction from the initial separable state, with LOCC
monotonicity between calls, proves the required finiteness.
This also applies to the larger local memories of a retained key
implementation.

At this point concavity and monotonicity, now proved, can be used:
\begin{equation}
 \frac1n\sum_{i=1}^n\cF(x_i)
 \le\cF\!\left(\frac1n\sum_{i=1}^n x_i\right)
 \le\cF(N).
 \label{eq:energy-average}
\end{equation}
The first inequality combines the possibly different unconditional
energies, and the second uses the actual constraint in
Eq.~\eqref{eq:energy}. This proves Eq.~\eqref{eq:adaptive}.
No branchwise energy restriction or postselection normalization
has been substituted for that constraint.

\subsection{Conversion to operational rate bounds}\label{app:operational}
The remaining step is to compare the entanglement available in
Eq.~\eqref{eq:telescope} with that required by a successful target.
We use the reductions from
Ref.~\cite[Appendices C 2, D 2, and D 3]{LossAmplifier}, recalled
in Appendix~\ref{app:tests}.

For quantum communication or entanglement distribution, apply the
size-$d_n$ maximally entangled-state test to the final decoded
registers. Local decoding and discarding cannot increase REE, so
Eq.~\eqref{eq:test-bound} with $\delta=\varepsilon_n$ gives
\[
 (1-\varepsilon_n)\log_2d_n-h_2(\varepsilon_n)
 \le\ERi(\zeta_n)
 \le\sum_i\cF(x_i).
\]
This is Eq.~\eqref{eq:quantum-error}.

For a key protocol, first use its input-preserving retained
implementation. Its optical marginals, hence every $x_i$, are the
same as in the original protocol. Its final retained quantum
state passes a size-$d_n$ privacy test with failure at most
$\delta_n=2e_n-e_n^2$.
Applying Eqs.~\eqref{eq:test-bound} and~\eqref{eq:telescope} now gives
\[
 (1-e_n)^2\log_2d_n-h_2(2e_n-e_n^2)
 \le\ERi(\zeta_n)
 \le\sum_i\cF(x_i),
\]
which is Eq.~\eqref{eq:key-bound}.
This proves Lemma~\ref{lem:finite} with the quantum and key
error conventions of Sec.~\ref{sec:model}.

Finally put $R_n=n^{-1}\log_2d_n$.
For a vanishing-error quantum sequence, the preceding estimates give
\[
 (1-\varepsilon_n)R_n\le\cF(N)+\frac{h_2(\varepsilon_n)}n.
\]
For a key sequence the left coefficient is $(1-e_n)^2$ and the
entropy term is $h_2(2e_n-e_n^2)/n$.
In either case the coefficient tends to one and the entropy term
vanishes because $h_2\le1$. Thus every achievable vanishing-error
rate is at most $\cF(N)$.
The input-preserving task conversions recalled in
Ref.~\cite[Appendix A 2]{LossAmplifier} give the full capacity
statement of Theorem~\ref{thm:gaussian}.
The replica limit establishing $\ERi$ was taken for a fixed state,
and hence for each fixed finite protocol, before this
communication limit in $n$.

For fixed nontrivial errors, the same inequalities also have a
direct finite-blocklength interpretation. The quantities
$\sum_i\cF(x_i)$ and the binary-entropy correction bound the
permitted target size after multiplication by the corresponding
success coefficient. In the capacity limit, $h_2\le1$ makes
the correction vanish per channel use, while $1-\varepsilon_n$
and $(1-e_n)^2$ approach one. This is the precise point where
the finite-error lemma becomes the weak-converse theorem.

\section{Channel limits and benchmark calculations}\label{app:benchmarks}
This appendix connects the common formula to its boundary values and
to the numerical comparisons in the main text. The limits keep the
channel parameters fixed unless explicitly stated otherwise.

The three channel families share the same resource formula, but
their large-energy symplectic spectra have different scalings.
For thermal loss, $u/N\to1-\eta$ and $w\to\nu$; for noisy
amplification, $u\to\nu$ and $w/N\to\kappa-1$.
For additive noise, both $u$ and $w$ grow as $\sqrt{\xi N}$.
The calculations below explain how these different cancellations
produce the appropriate PLOB limit in each case.

\subsection{Zero energy and the entanglement-breaking boundary}\label{app:limits}
At $N=0$, Eq.~\eqref{eq:rateparams} gives
$y=v$, $L=1+v$, and $(t,u,w)=(0,0,v)$.
The cross-entropy part of Eq.~\eqref{eq:F} is then
$\varphi(1+v)-\varphi(v)=g(v)$, which cancels $g(w)$.
Thus $\cF_{\tau,v}(0)=0$.
Operationally, a zero unconditional mean for a nonnegative photon
number operator forces every transmitted mode to be vacuum.
Output thermal photons do not by themselves distribute
entanglement or secret key.

To examine the entanglement-breaking boundary, use the root equation
to write
\[
 uw=(N-t)(y-t)
    =Ny-t(N+y)+t^2
    =t+N(v-\tau).
\]
At $v=\tau$, this becomes $uw=t$. Consequently
\begin{equation}
 N=u(1+w),~~
 y=w(1+u),~~
 L-t=(1+u)(1+w).
 \label{eq:eb-identity}
\end{equation}
Substitution into the cross-entropy expression gives
\begin{align}
 &\varphi(uw)+\varphi((1+u)(1+w))
       -\varphi(u(1+w))-\varphi(w(1+u))\notag\\
 &~~=(1+u)\log_2(1+u)-u\log_2u
        +(1+w)\log_2(1+w)-w\log_2w
        =g(u)+g(w).
 \label{eq:eb-cancellation}
\end{align}
The equality follows by expanding the logarithm of each product.
Terms with a zero factor are understood by continuity.
It proves that the non-entanglement-breaking formula approaches
zero continuously as $v\uparrow\tau$.

For $v\ge\tau$, the channel is entanglement breaking
\cite{HolevoWerner,HolevoEB,PLOB,MLG}.
Its measure-and-prepare realization cannot increase entanglement
across the laboratory cut. Starting from local preparations and
public randomness, an adaptive protocol therefore has no final
laboratory entanglement to pass a nontrivial asymptotically
error-free target test. The same statement applies to a retained
key implementation, so the privacy-test bound also excludes a
positive key rate. This establishes the cases in which
$\cF_{\tau,v}$ is defined to be zero, without extending the
$z>1$ witness argument outside its domain.

\subsection{Infinite-energy limits for the three noisy families}\label{app:highenergy}
For a fixed non-entanglement-breaking channel, separate the two
terms of the bound as
\begin{equation}
 C_N=\varphi(t)+\varphi(L-t)-\varphi(N)-\varphi(y),
 ~~ S_N=g(u)+g(w),~~ \cF(N)=C_N-S_N.
 \label{eq:cross-and-entropy}
\end{equation}
The thermal entropy satisfies
$g(x)=\log_2x+\log_2e+o(1)$ as $x\to\infty$.
The limiting behavior of $u,w$ can be read directly from their
quadratic equations. Substituting $t=N-u$ or $t=y-w$ into the
root equation gives
\begin{align}
 u^2+[1+v+(\tau-1)N]u&=N(1+v-\tau),
 \label{eq:u-quadratic}\\
 w^2+[1-v+(1-\tau)N]w&=v(N+1).
 \label{eq:w-quadratic}
\end{align}
These equations also explain the different scaling at unit gain.

\smallskip\noindent\emph{Thermal loss.}
Let $\tau=\eta<1$ and $v=(1-\eta)\nu$.
In Eq.~\eqref{eq:w-quadratic}, the coefficient of $w$ grows as
$(1-\eta)N$, while the right-hand side grows as $vN$.
The nonnegative root is bounded and approaches
$v/(1-\eta)=\nu$. Expanding that root one order further gives
$w=\nu+O(N^{-1})$. It follows that
\begin{equation}
 t=\eta N-\eta\nu+O(N^{-1}),\quad
 u=(1-\eta)N+\eta\nu+O(N^{-1}),\quad
 w=\nu+O(N^{-1}).
 \label{eq:loss-root-limit}
\end{equation}

For the cross entropy, it is convenient to pair nearby arguments:
$t-y=-w$ and $L-t=N+1+w$. Therefore
\[
 C_N=[\varphi(y-w)-\varphi(y)]
       +[\varphi(N+1+w)-\varphi(N)].
\]
For $x\to\infty$ and a fixed displacement $b$,
$\varphi(x+b)-\varphi(x)
=b(\log_2x+\log_2e)+o(1)$.
Using $y\sim\eta N$ and $w\to\nu$ gives
\begin{align}
 C_N&=-\nu(\log_2(\eta N)+\log_2e)
       +(1+\nu)(\log_2N+\log_2e)+o(1)\notag\\
    &=\log_2N+\log_2e-\nu\log_2\eta+o(1),\\
 S_N&=\log_2N+\log_2(1-\eta)+\log_2e+g(\nu)+o(1).
\end{align}
Their difference is the thermal-loss PLOB expression in
Eq.~\eqref{eq:loss-plob}.

\smallskip\noindent\emph{Noisy amplification.}
Let $\tau=\kappa>1$ and $v=(\kappa-1)(\nu+1)$.
Now Eq.~\eqref{eq:u-quadratic} has a coefficient growing as
$(\kappa-1)N$ and right-hand side $(\kappa-1)\nu N$.
Its nonnegative root is $u=\nu+O(N^{-1})$, and hence
\begin{equation}
 t=N-\nu+O(N^{-1}),~~
 u=\nu+O(N^{-1}),~~
 w=(\kappa-1)(N+1)+\kappa\nu+O(N^{-1}).
 \label{eq:amp-root-limit}
\end{equation}
Using $L-t=y+1+u$ instead pairs the cross-entropy terms as
\[
 C_N=[\varphi(N-u)-\varphi(N)]
       +[\varphi(y+1+u)-\varphi(y)].
\]
Since $y\sim\kappa N$, the same fixed-displacement expansion gives
\begin{align}
 C_N&=-\nu(\log_2N+\log_2e)
       +(1+\nu)(\log_2(\kappa N)+\log_2e)+o(1)\notag\\
    &=\log_2N+\log_2e+(1+\nu)\log_2\kappa+o(1),\\
 S_N&=\log_2N+\log_2(\kappa-1)+\log_2e+g(\nu)+o(1).
\end{align}
Subtracting gives Eq.~\eqref{eq:amp-plob}.

\smallskip\noindent\emph{Additive noise.}
For $\tau=1$ and $v=\xi$, neither coefficient in
Eqs.~\eqref{eq:u-quadratic}--\eqref{eq:w-quadratic} grows linearly
with $N$. Instead,
$u(u+1+\xi)=\xi N$ and $w=u+\xi$, so
\begin{equation}
 u=\sqrt{\xi N}-\frac{1+\xi}{2}+O(N^{-1/2}),~~
 w=u+\xi,~~ t=N-u.
 \label{eq:add-root-limit}
\end{equation}
Both symplectic photon numbers diverge as $\sqrt N$. Accordingly,
\[
 S_N=\log_2(uw)+2\log_2e+o(1)
    =\log_2N+\log_2\xi+2\log_2e+o(1).
\]

The cross entropy now contains growing displacements:
\[
 C_N=\varphi(N-u)+\varphi(N+u+1+\xi)
                   -\varphi(N)-\varphi(N+\xi).
\]
Expand all terms around $N$. The net linear displacement is
$-u+(u+1+\xi)-\xi=1$. The quadratic coefficient is
$[u^2+(u+1+\xi)^2-\xi^2]/2$.
Since $\varphi'(N)=\log_2N+\log_2e$ and
$\varphi''(N)=1/(N\ln2)$, this gives
\begin{equation}
 C_N=\log_2N+\log_2e+
 \frac{u^2+(u+1+\xi)^2-\xi^2}{2N\ln2}+o(1)
 =\log_2N+(1+\xi)\log_2e+o(1).
 \label{eq:additive-cross-entropy-limit}
\end{equation}
The remainders vanish because
$\varphi'''(x)=-1/(x^2\ln2)$ and the displacements are
$O(\sqrt N)$, so their cubic remainders are $O(N^{-1/2})$.
The quadratic term must be retained: it contributes
$\xi\log_2e$ because $u^2/N\to\xi$.
Subtracting $S_N$ yields Eq.~\eqref{eq:add-plob}.

These three calculations keep the channel fixed while its input
energy grows. Together with monotonicity from
Lemma~\ref{lem:concavity}, they prove $\cF(N)\le\Phi$ and the
large-energy limits stated in Sec.~\ref{sec:three}.
The vanishing-excess-noise boundaries are consistent with the
quantum-limited capacity formulas of Ref.~\cite{LossAmplifier};
their separate capacity proofs are not needed here.

\subsection{Hashing and the residual gap}\label{app:hashing}
The benchmark considered here is obtained in two stages: distribute
independent copies of a bipartite state, and then distill them using
one-way classical communication. For a state on $RB$, the two
hashing expressions are $S(B)-S(RB)$ and $S(R)-S(RB)$, with the
communication direction reversed between them
\cite{BDSW,DevetakWinter,RCI}. Their positive maximum is an
achievable rate, rather than an optimization over all two-way
distillation protocols.

Gaussian preparation of the shared state does not constrain its
subsequent distillation to Gaussian operations. This distinction
is required by the impossibility of distilling Gaussian
entanglement using Gaussian operations alone
\cite{EisertNoGo,GiedkeNoGo}. The hashing lower bound is used with
the unrestricted LOCC model specified in the main text.

Sending one arm of a two-mode squeezed vacuum through the channel
gives the covariance matrix
\begin{equation}
 V_{RB}=
 \begin{pmatrix}
 (2N+1)I_2&2\sqrt{\tau N(N+1)}Z\\
 2\sqrt{\tau N(N+1)}Z&(2y+1)I_2
 \end{pmatrix},
 ~~ Z=\operatorname{diag}(1,-1).
 \label{eq:gaussian-covariance}
\end{equation}
Its symplectic eigenvalues, in either order, are
\[
 \sqrt{L^2-4\tau N(N+1)}+(N-y)=2u+1,~~
 \sqrt{L^2-4\tau N(N+1)}-(N-y)=2w+1.
\]
Thus $S(R)=g(N)$, $S(B)=g(y)$, and
$S(RB)=g(u)+g(w)$.
The two directions of one-way hashing give
$S(B)-S(RB)$ and $S(R)-S(RB)$, respectively.
Choosing the larger and allowing the trivial zero rate gives
Eq.~\eqref{eq:hashing}
\cite{DevetakWinter,PGPBL,RCI}.

To explain the infinite-dimensional use of hashing, start with
finite optical cutoffs. Truncate the squeezed input in its
Schmidt number and normalize it. This is a local preparation by
Alice and decreases the transmitted mean. After the channel,
apply a local vacuum-replacement cutoff to Bob's output.
The resulting reference and output spaces are finite-dimensional,
so the finite-dimensional hashing theorem applies directly.
The truncation mechanisms are those used in
Ref.~\cite[Appendices B 4 and E 4]{LossAmplifier}.

Removing Bob's cutoff at fixed input cutoff, and then removing
the input cutoff, gives trace-norm convergence to $\omega_N$.
The reference and output mean photon numbers also converge:
the input mean increases to $N$, the untruncated output mean is
its affine image $\tau N+v$, and vacuum replacement removes only
the finite-energy tail at each fixed input cutoff.
Their sum gives convergence of the joint mean.

These mean bounds justify entropy convergence, which does not
follow from trace-norm convergence alone. For any of the relevant
subsystems, choose a fixed full-rank thermal reference state
$\theta_\beta=Z_\beta^{-1}\exp(-\beta\hat H)$, where $\hat H$ is
the total photon number on that subsystem.
Then
\[
 D(\rho\Vert\theta_\beta)
 =-S(\rho)+\frac{\beta}{\ln2}\Tr(\hat H\rho)+\log_2Z_\beta.
\]
Lower semicontinuity of relative entropy and convergence of the
mean energy give the upper entropy limit; lower semicontinuity
of entropy gives the lower limit. For the joint system one may
equivalently use a fixed full-rank product thermal state.
All limiting entropies are finite. The two hashing expressions
therefore converge to the Gaussian expressions above.

Energy-constrained entropy continuity gives a broader formulation
of this type of limiting argument~\cite{WinterContinuity}.
Here it is needed only for the optical reference and output of
the chosen achievable protocol, which have controlled photon
means. It imposes no energy or entropy condition on the memories
of the arbitrary adaptive protocols covered by the converse.

For any strictly smaller positive target rate, first fix
cutoffs large enough that the finite-dimensional hashing rate
exceeds it, and then take the coding blocklength to infinity.
The optical input mean is already within the original budget
at those fixed cutoffs. This order of limits proves achievability
of the benchmark in the stated finite-round operational model.

The difference between the upper bound and the \emph{unclipped}
hashing expression
$I_{\rm hash}=\max\{g(N),g(y)\}-g(u)-g(w)$ has a simple integral form:
\begin{equation}
 \cF_{\tau,v}(N)-I_{\rm hash}(N)
 =\int_t^{\min(N,y)}
      \log_2\frac{L-z}{z}\,\dd z>0,
 \label{eq:hashing-gap}
\end{equation}
for $N>0$ in the strictly noisy non-entanglement-breaking region.
To derive it, define
\[
 B(z)=\varphi(z)+\varphi(L-z)-\varphi(N)-\varphi(y).
\]
Then $B(t)=C_N$ and
$B'(z)=\log_2[z/(L-z)]$.
If $N\le y$, the upper endpoint is $N$ and $B(N)=g(y)$;
if $y\le N$, it is $y$ and $B(y)=g(N)$.
Thus $B(\min(N,y))=\max\{g(N),g(y)\}$, and integration proves
the identity.

For strict positivity, $h(L-h)$ is increasing on
$[0,\min(N,y)]$, and its value minus $\tau N(N+1)$ at the two
candidate endpoints is, respectively,
$N(1+v-\tau)>0$ and $v(N+1)>0$.
Hence $t<\min(N,y)$.
Also $z\le\min(N,y)<L/2$ throughout the integration interval,
so its integrand is positive.
The achievable benchmark is $[I_{\rm hash}]_+$.
When $I_{\rm hash}<0$, the gap to that clipped benchmark is
$\cF(N)$ itself; Eq.~\eqref{eq:hashing-gap} is not a formula
for the clipped gap in that case.

\subsection{Evaluation of the compared upper bounds}\label{app:numerics}
For reproducibility, we give the dilation, squashing operation,
and entropy calculation used for Fig.~\ref{fig:bounds}.
These comparisons use squashed entanglement, defined by
\begin{equation}
 E_{\rm sq}(R:B)_\rho
 =\frac12\inf_{\rho_{RBE}:\,\Tr_E\rho_{RBE}=\rho_{RB}}
     I(R;B\mid E)_\rho,
 \label{eq:squashed-definition}
\end{equation}
where $I(R;B\mid E)=S(RE)+S(BE)-S(E)-S(RBE)$ whenever these
entropies are finite~\cite{ChristandlWinter}. An extension can
be constructed by applying a channel to a purifying environment.
Evaluating a particular extension gives an upper bound on the
infimum. Its use for optical capacity converses is developed in
Refs.~\cite{DSW,Goodenough}. In the calculations below the
extension is generated by balanced beam splitters, and the cited
thermal-input optimization results supply the channel bound.
This distinguishes the evaluated comparison from an optimization
over all possible squashing channels.

The constructions of Refs.~\cite[Sec. VI]{DSW}
and~\cite[Eq. (27)]{Goodenough} use opposite orderings of two
quantum-limited components:
\begin{equation}
 \begin{aligned}
 \cG_{\tau,v}&=\cL_T\circ\cA_{G',0},
 &T&=\tau-v,&G'&=\tau/T,\\
 \cG_{\tau,v}&=\cA_{G,0}\circ\cL_a,
 &G&=1+v,&a&=\tau/G.
 \end{aligned}
 \label{eq:comparison-decompositions}
\end{equation}
The first is the DSW ordering and the second the GEW ordering.
In the strictly noisy non-entanglement-breaking region,
$0<T,a<1$ and $G',G>1$.
For example, the first ordering gives output mean
$T[G'N+(G'-1)]=\tau N+v$, whereas the second gives
$GaN+(G-1)=\tau N+v$.
Their covariance actions also coincide with
Eq.~\eqref{eq:channel}. These components are auxiliary
dilations of a noisy channel, not additional capacity cases
studied in this paper.

\smallskip\noindent\emph{1. The symmetric squashing extension.}
Implement either ordering with two vacuum environment inputs.
Split each environment output with another vacuum on a balanced
beam splitter. Denote the retained environment outputs by
$E'_1,E'_2$ and the other halves by $F'_1,F'_2$.
If $R$ purifies the thermal signal input, the full state
$RBE'_1E'_2F'_1F'_2$ is pure.
Conditional-entropy duality gives
\[
 I(R;B\mid E'_1E'_2)
 =S(B\mid E'_1E'_2)+S(B\mid F'_1F'_2).
\]
Balanced splitting makes the two conditional entropies equal
up to local phase rotations. Evaluating one half of this conditional
mutual information therefore gives
\begin{equation}
 U(N)=S(BE'_1E'_2)-S(E'_1E'_2).
 \label{eq:squash-evaluation}
\end{equation}
We use the thermal-input optimization and the
energy-constrained converse for these phase-insensitive Gaussian
squashing constructions from Refs.~\cite{DSW,Goodenough}.
The two orderings give $U_{\rm DSW}$ and $U_{\rm GEW}$.
Only the displayed squashing maps are evaluated.

\smallskip\noindent\emph{2. Construct the covariance matrix.}
An explicit reference mode $R$ is unnecessary when computing
the two entropies in Eq.~\eqref{eq:squash-evaluation}: both
depend only on the thermal signal marginal and environment modes.
Start from the five-mode covariance
\[
 V_0=(2N+1)I_2\oplus I_2\oplus I_2\oplus I_2\oplus I_2.
\]
Mode $0$ is the signal, modes $1,2$ are dilation vacua, and
modes $3,4$ are the squashing vacua. On the indicated pair of
modes embed
\begin{equation}
 S_{\rm BS}(q)=
 \begin{pmatrix}
 \sqrt qI_2&\sqrt{1-q}I_2\\
 -\sqrt{1-q}I_2&\sqrt qI_2
 \end{pmatrix},
 ~~
 S_{\rm amp}(G)=
 \begin{pmatrix}
 \sqrt GI_2&\sqrt{G-1}Z\\
 \sqrt{G-1}Z&\sqrt GI_2
 \end{pmatrix},
 \label{eq:gaussian-matrices}
\end{equation}
with identity on the other modes.
For DSW, apply $S_{\rm amp}(G')$ on $(0,1)$, followed by
$S_{\rm BS}(T)$ on $(0,2)$.
For GEW, apply $S_{\rm BS}(a)$ on $(0,1)$, followed by
$S_{\rm amp}(G)$ on $(0,2)$.
In both cases finish with $S_{\rm BS}(1/2)$ on $(1,3)$ and
$(2,4)$. At each step update $V\mapsto SVS^T$.
After these transformations, modes $(0,1,2)$ are
$(B,E'_1,E'_2)$.

\smallskip\noindent\emph{3. Extract entropies and form the comparison.}
Take the principal covariance submatrix on modes $(0,1,2)$
and, separately, on modes $(1,2)$.
For an $\ell$-mode covariance $V_{\rm red}$, let its symplectic
eigenvalues be $\lambda_1,\ldots,\lambda_\ell\ge1$.
They are the positive magnitudes from the paired spectrum of
$i\Omega_\ell V_{\rm red}$, with
$\Omega_\ell=\bigoplus_{j=1}^\ell
\begin{psmallmatrix}0&1\\-1&0\end{psmallmatrix}$,
and its entropy is
\[
 S(V_{\rm red})=\sum_{j=1}^\ell
          g\!\left(\frac{\lambda_j-1}{2}\right).
\]
The difference of the two entropies gives the value in
Eq.~\eqref{eq:squash-evaluation}. Taking the minimum of the two
orderings and the corresponding PLOB value then gives
$U_{\rm prev}$ in Eq.~\eqref{eq:previous}.
In the entanglement-breaking region the capacity bound is set
to zero directly; the DSW parametrization is not evaluated at
the singular point $T=0$.

For numerical evaluation of the sector expression, the smaller
root can be written without subtracting nearly equal roots:
\[
 t=\frac{2\tau N(N+1)}
        {L+\sqrt{L^2-4\tau N(N+1)}}.
\]
At very large energies, the entropy differences themselves may
still require increased arithmetic precision. The analytic
limits in Appendix~\ref{app:highenergy} give independent checks
on that regime.

\end{document}